\newtheorem*{thm}{General Theorem}
\newtheorem{theorem}{Theorem}[section]
\newtheorem{cor}[theorem]{Corollary}
\newtheorem{lemma}{Lemma}
\newtheorem{prop}[theorem]{Proposition}
\newtheorem{definition}{Definition}[section]
\newtheorem{example}{Example}[section]
\newtheorem{remark}{Remark}[section]
\newcommand{\p}{\partial}
\newcommand{\Nbb}{\mathbb{N}}
\newcommand{\Rbb}{\mathbb{R}}
\newcommand{\Zbb}{\mathbb{Z}}
\newcommand{\beq}{\begin{equation}}
\newcommand{\eeq}{\end{equation}}
\numberwithin{equation}{section}
\numberwithin{figure}{section}
\begin{document}

\title{Circle Foliations Revisited: \\ Periods of Flows whose Orbits are all Closed}

\author{ Yoshihisa Miyanishi\thanks{Department of Mathematical Sciences,  Faculty of Science, Shinshu University,  Matsumoto 390-8621, Japan. Email: {\tt miyanishi@shinshu-u.ac.jp}.}}
\date{}
\maketitle

\begin{abstract}
We consider a $C^3$ free action $\mu: \Rbb\times M \to M$ on an $n$--dimensional $C^3$ compact connected manifold $M$.  
The manifold $M$ is foliated by circles if all orbits induced by $\mu$ are closed. The periods of $\mu$ are exactly   
the same if and only if there exists a Riemannian metric on $M$ with respect to 
which the circles are embedded as totally geodesic submanifolds of $M$ \cite{Wadsley}. 
It automatically follows a classical result \cite{Besse} that all periods are exactly the same if all the geodesic flows on the cosphere bundle  $S^* M$ are periodic (the so--called ``$P_T$--manifold'').  

Our purpose here is to adapt these results for Reeb flows or Hamiltonian flows on contact manifolds.    
Consequently, all periods are exactly the same if the contact manifold is connected and all orbits 
on the contact manifold are closed. 
We also present concrete examples of periodic flows, all of whose orbits are closed, such as Harmonic oscillators, 
Lotka-Volterra systems, and others. Lotka-Volterra systems, Reeb flows, and some geodesic flows have non-trivial periods, 
whereas the periods of Harmonic oscillators and similar systems can be easily obtained through direct calculations.

%Here, our main purpose is to show that if all flows on $\Sigma_E$ are periodic, such periods are all the same. 
%In other words, the period $T$ depends only on $E$ when all flows are periodic on $\Sigma_E$.  
As an application to quantum mechanics, we examine the spectrum of semiclassical Shr\"odinger operators 
$$
P_{A, V} = \sum_{1\leq j \leq n}  (\hbar D_{x_j} - A_j({\bf{x}}))^2 +V({\bf {x}}), \quad {\bf x}=(x_1, \ldots, x_n) \in \Rbb^n.
$$
 If the spectrum of $P_{A, V}$ near $E \in \Rbb$ consists of eigenvalues, we denote such a set as 
$\Lambda_{c}(\hbar) = \{ E(\hbar)| \ |E-E(\hbar)| <c \hbar^{1-\delta},\ E(\hbar) \in \sigma_{\rm p}(P_{A, V})\}$ for small $\hbar>0$. For  fixed constants $c>0$ and $0<\delta<1/2$, the cluster of the spectrum near $E$ (the so--called essential\ difference\ spectrum) is defined by   
$$
\sigma_{\mbox{e.d.}} (E) := \lim_{\hbar\rightarrow +0} \overline{ \left\{ \frac{E_k(\hbar) -E_l(\hbar)}{\hbar}\Big| \ E_k(\hbar), E_l(\hbar) \in \Lambda_{c}(\hbar) \right\}}^{\rm ess},
$$
where the notation $\displaystyle\lim_{\hbar\rightarrow +0}\overline{\{\cdots\}}^{\rm ess}$ denotes the set of cluster (accumulation) points. 

Under suitable assumptions, we find that 
$$
\mbox{\rm either}\quad \sigma_{\mbox{e.d.}} (E)=\Rbb\quad \mbox{\rm or}\quad \displaystyle\sigma_{\mbox{e.d.}} (E)=\frac{2\pi}{T}\Nbb.
$$
In the former case, at least one of the Hamiltonian flows is {\it not periodic} on the energy surface $\Sigma_E=H^{-1}(E)$. 
In the latter case,  all Hamiltonian flows on $\Sigma_E=H^{-1}(E)$ are {\it periodic} with a common period $T$. This fact represents one of the semiclassical analogies of the Helton-Guillemin theorem \cite{Helton, Guillemin}. 
\end{abstract}

\noindent{\footnotesize {\bf 2020 Mathematics Subject Classifications}.  37J46, 35P20, 70H05 (primary),  37E35,  46N50, 81Q80, 58C40 (secondary)}

\noindent{\footnotesize {\bf Keywords}. Circle Foliations, Contact manifolds, Reeb flows, Hamiltonian flows,  Periodic flows, Spectral clustering}

%\tableofcontents
%%%%%%%%%%%%%%%%%%%%%%%%%%%%%
\section{Introduction}
%%%%%%%%%%%%%%%%%%%%%%%%%
Let $M$ be an $n$--dimensional $C^3$ manifold, and $\mu$ be a $C^3$ free action on $M$: 
\beq\label{eq: free action on M}
\begin{array}{ccc}
\mu: \Rbb \times M                    & \longrightarrow & M                     \\[-4pt]
\hspace{5mm}\rotatebox{90}{$\in$} & \hspace{-5mm}         & \rotatebox{90}{$\in$} \\[-4pt]
\hspace{5mm}  (t, x)                     &\hspace{-1.5mm} \longmapsto   \hspace{-1.5mm}  & \mu(t, x). 
\end{array}
\eeq
Then the closed compact orbits induced by the action $\mu$ are circles (referred to as $\mu$--circles). Our main concern is to find the nature of the periods of   
$\mu$ on a connected submanifold $N\subset M$ when $N$ is foliated by $\mu$--circles. 
In other words, we consider the period of $\mu$ if every flow on $N$ induced by $\mu$ is periodic. Notably,  
even if every flow is periodic, the minimal periods $T_{\mu}>0$ corresponding to the $\mu$--circles generally differ from each other. 
However, if $M$ is a $C^3$--Riemannian manifold and the free action generates the geodesic flow on $M$, 
the celebrated A. W. Wadsley's  theorem states that  all flows have the same period if the manifold $M$ is foliated by $\mu$--circles
 (see, e.g., \cite{Wadsley}, \cite{Epstein}, and \cite{Besse}; see also \cite{GM} for a recent development.). For the notations of Riemannian geometry, we refer the reader to \cite{KN}. 

Thus, if the action \eqref{eq: free action on M} restricted to $N$ can be considered as a geodesic flow of some Riemannian metric, we conclude  
that all flows on $N$ have the same period if $N$ is foliated by $\mu$--circles. 
To be more precise, we employ some terminolgy  from foliation theory \cite{Sullivan}. One term is that a one--dimensional foliation is {\it taut} 
if the leaves are geodesic for some Riemannian metric. Another term is that the flow, whose 
parametrization is arc-length, is called {\it geodesible} (see also Definition \ref{def: geodesible}).  This leads us to the following general theorem.

\begin{thm}\label{thm: main}
Let $N$ be a connected $C^3$ submanifold of $M$ and $\mu$ be a $C^3$ free action on $N$. Assume that $N$ is foliated by $\mu$--circles.   
If the circle foliation of $N$ is taut and geodesible, then all flows on $N$ induced by $\mu$ have the same period. 
\end{thm}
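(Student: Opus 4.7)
The plan is to reduce the theorem to the Wadsley-type statement quoted in the introduction by producing, from the hypotheses on $N$, a single Riemannian metric with respect to which $\mu$ itself is the unit-speed geodesic flow. Write $X$ for the infinitesimal generator of $\mu$, so that the $\mu$-circles are precisely the closed orbits of $X$. The first step is to construct a Riemannian metric $g$ on $N$ for which $X$ is a $g$-unit vector field and each $\mu$-circle is a $g$-geodesic. This is exactly what the combined hypotheses provide: geodesibility asserts that the flow with its given parametrization is arc-length for some metric, while tautness asserts that the leaves of the foliation are geodesics for some metric; together, as in Sullivan's characterization, they furnish a metric $g$ in which $X$ is simultaneously of unit length and geodesic.

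With $g$ in place, every $\mu$-circle is a closed $g$-geodesic embedded in $N$ (here the freeness of the $\Rbb$-action $\mu$ is used to rule out self-intersections and to guarantee that orbits are smooth embedded circles). The orbits therefore foliate $N$ by one-dimensional totally geodesic submanifolds of $(N,g)$. The version of Wadsley's theorem recalled in the introduction then yields a positive constant $T$ that is a common period of the geodesic flow of $g$ along the $\mu$-circles.

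Finally, because $X$ is $g$-unit along each orbit, the $\mu$-period of the orbit through a point $x\in N$ equals the $g$-length of that orbit, which in turn equals $T$. Hence all orbits of $\mu$ share the same period $T$, as claimed.

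The main obstacle is the first step, namely, realizing tautness and geodesibility through one and the same metric. If the paper's Definition~\ref{def: geodesible} already bundles both aspects, this step is essentially definitional; otherwise one needs a transverse rescaling of a taut metric that forces $|X|_g \equiv 1$ without destroying the geodesic property of the leaves. Here the compactness of the orbits together with the $C^3$ regularity of $\mu$ is essential: it guarantees that the rescaling function is bounded away from zero along each leaf and extends smoothly across $N$. Once the metric is secured, the rest of the argument is a direct invocation of Wadsley's theorem followed by the translation between $g$-length and $\mu$-period.
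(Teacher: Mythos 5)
Your proposal is correct and takes essentially the same route as the paper: the taut/geodesible hypothesis (via Definition~\ref{def: geodesible}, which already supplies a single metric $g$ with $|X|=1$ and $\nabla_X X=0$) exhibits the $\mu$--orbits as unit-speed closed geodesics, and the claim is then reduced to the Wadsley statement quoted in the introduction, with the $\mu$--period of each orbit identified with its $g$--length. The ``main obstacle'' you flag is indeed only definitional, exactly as you suspect, so no transverse rescaling argument is needed.
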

The above general theorem %\ref{thm: main} 
states a quintessential analogy to the geodesic flows on a $P_{T}$-manifold (see e.g. \cite{Besse}): 
All simple closed geodesics have the same length if all geodesics are closed (e.g., Sphere, Zoll surface).  

Our main purpose here is to adapt the general theorem %\ref{thm: main} 
for other flows frequently used in mathematical models, such as in Physics, Biology and others. To achieve this,
we explore systems and concrete examples that are both taut and geodesible. 

\begin{theorem}\label{thm: geodesible condition}
Let $N$ be a $(2n+1)$--dimensional connected $C^3$ manifold and  $\alpha$ be a contact one--form on $N$, 
that is, the form $\alpha \wedge (d \alpha)^n$ is nowhere vanishing on $N$.
Let $X$ be the Reeb field associated with $\alpha$, uniquely defined by $\iota_X d\alpha =0$ and $\alpha(X)=1$.    
If the orbits induced by the flow $\exp tX: N \rightarrow N$ are taut and closed, then all flows on $N$ induced by $\exp tX$ have the same period.  
\end{theorem}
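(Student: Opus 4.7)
The plan is to deduce Theorem \ref{thm: geodesible condition} directly from the General Theorem \ref{thm: main} applied to the Reeb flow $\exp tX$. Closedness of every orbit and tautness are hypothesized, and $\alpha(X)=1$ forces $X$ to be nowhere vanishing, so $\exp tX$ is a locally free $C^3$ action producing a genuine circle foliation of $N$. Consequently the only nontrivial verification is that the Reeb foliation is \emph{geodesible}: one must exhibit a Riemannian metric on $N$ for which the Reeb orbits are geodesics parametrized by arc length.

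To supply such a metric I would exploit the contact decomposition $TN = \Rbb X \oplus \xi$ with $\xi=\ker\alpha$. Since $d\alpha|_\xi$ is a fibrewise symplectic form, I can pick an almost complex structure $J$ on $\xi$ compatible with $d\alpha|_\xi$ (obtained globally, for instance, by polar decomposition from any auxiliary Euclidean structure on $\xi$). Then I define $g$ by declaring $X$ to have unit length and be orthogonal to $\xi$, and $g|_\xi(u,v)=d\alpha(u,Jv)$; this automatically yields $g(Z,X)=\alpha(Z)$ for every vector field $Z$. The verification that the $X$--orbits are geodesics reduces to showing $\nabla_X X=0$ for the Levi--Civita connection of $g$: the component along $X$ is zero because $|X|_g\equiv 1$, and the component along any $Y\in\xi$ reduces via the Koszul formula and the identity $g(\,\cdot\,,X)=\alpha$ to $\alpha([Y,X])$, which by Cartan's formula equals $d\alpha(X,Y)$ and therefore vanishes by $\iota_X d\alpha=0$. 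Since $|X|_g\equiv 1$, the orbits are unit--speed geodesics and the foliation is geodesible.

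With geodesibility in hand and tautness assumed, the hypotheses of Theorem \ref{thm: main} are satisfied and the common period follows. The main potential obstacle is the geodesibility step, but in essence it is a tautology: the two defining identities $\alpha(X)=1$ and $\iota_X d\alpha=0$ of the Reeb field are precisely the classical Gluck--Sullivan criterion for a non--singular vector field to be geodesible, so no global or topological obstruction can arise, and the construction above merely makes that criterion concrete.
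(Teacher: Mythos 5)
Your proposal is correct and follows essentially the same route as the paper: establish that the Reeb field satisfies the Gluck--Sullivan geodesibility criterion (the paper cites this as Proposition \ref{prop: geodesible condition Sullivan}, proved in Appendix A with the same metric construction --- unit $X$ orthogonal to $\ker\alpha$ --- and the same Koszul-type computation as your Lemma--A1-style verification of $\nabla_X X=0$), and then invoke the General Theorem with the hypothesized tautness. Your use of a $d\alpha$-compatible almost complex structure on $\xi$ is an unnecessary but harmless refinement; any fibre metric on $\ker\alpha$ suffices, exactly as in the paper's appendix.
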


We also refer the reader to \cite{BH} for the concepts of symplectic and contact manifolds.

Specifically, the general Hamiltonian $H$ on an $n$--dimensional manifold $M$ is considered:    
\beq\label{eq: Classical Hamiltonian}
 H(q, p) \in C^3(T^* M, \Rbb) 
 \eeq
where the notation $(q, p) \in T^* M$ denotes local coordinates in the cotangent space $T^*M$.  
It follows from Theorem \ref{thm: geodesible condition} that all Hamiltonian flows on the energy surface $\Sigma_E =\{(q, p) \in T^{*}M |\ H(q, p)=E\}$ has the same period if every Hamiltonian flow on $\Sigma_E$ is closed.
In fact, $\Sigma_E$ is a ($2n+1$)--dimensional contact manifold with some one--form $\alpha$ and 
the Reeb vector field is proportional to the Hamiltonian vector field $X_H$ (Proposition \ref{prop: geodesible condition Sullivan} and the following comments). We then notice that the energy surface $\Sigma_E$ can be decomposed as an {\it non-disjoint} union 
$\Sigma_E=\cup_{i} {\mathcal{F}}_i$. Here each circle foliation ${\mathcal{F}}_i$ induced by the Hamiltonian flow $\exp tX_H$ is taut 
(Proposition \ref{prop: almost geodesic foliation}).  Thus, all Hamiltonian flows on $\Sigma_E$ have the same period, if every flow is periodic on the energy surface $\Sigma_E$. 

\begin{theorem}\label{thm: Hamiltonian case}
Let $H \in C^3(T^*M, \Rbb)$ and $\Sigma_E := \{ (q, p)\in T^*M|\ H(q, p) =E\}$. Assume that $\Sigma_E$ is compact connected, $dH \not=0$ near $\Sigma_E$, and all Hamiltonian flows on $\Sigma_E$ are periodic. Then, all Hamiltonian flows on $\Sigma_E$ have the same period. 
\end{theorem}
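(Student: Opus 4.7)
The plan is to deduce Theorem~\ref{thm: Hamiltonian case} from the contact version Theorem~\ref{thm: geodesible condition} by constructing on the energy surface $\Sigma_E$ a contact form whose Reeb dynamics coincide with those of the Hamiltonian field $X_H$ up to a reparametrization that preserves the property of having a common period.

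Since $dH\neq 0$ in a neighborhood of $\Sigma_E$, the level set is a smooth, compact, connected hypersurface of $T^{*}M$ of dimension $2n+1$, and $X_H$ is nowhere zero on $\Sigma_E$. Writing $\lambda$ for the tautological $1$-form and $\omega=d\lambda$ for the canonical symplectic form on $T^{*}M$, the pullback $\omega|_{\Sigma_E}$ is closed, of constant rank $2n$, with one-dimensional kernel spanned by $X_H$. By the hypothesis that every orbit of $X_H|_{\Sigma_E}$ is closed, these orbits assemble into a circle foliation $\mathcal{F}$ of $\Sigma_E$.

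Next I would invoke the propositions quoted immediately before the statement. Proposition~\ref{prop: geodesible condition Sullivan}, applied to the periodic flow of $X_H$, supplies a $1$-form $\alpha$ on $\Sigma_E$ with $\iota_{X_H}d\alpha=0$ and $\alpha(X_H)>0$. Since $\ker\omega|_{\Sigma_E}=\Rbb X_H$ agrees with the kernel of $d\alpha$, the top form $\alpha\wedge(d\alpha)^n$ is nowhere vanishing; thus $\alpha$ is a contact form on $\Sigma_E$ whose Reeb vector field $R=X_H/\alpha(X_H)$ is positively proportional to $X_H$. Proposition~\ref{prop: almost geodesic foliation} then writes $\mathcal{F}=\bigcup_i\mathcal{F}_i$ as a non-disjoint union of taut sub-foliations, making the circle foliation $\mathcal{F}$ itself taut. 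Theorem~\ref{thm: geodesible condition} applies to $(\Sigma_E,\alpha)$ and produces a common period $T$ for every Reeb orbit.

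The main difficulty lies in converting this common Reeb period into a common Hamiltonian period: if the positive factor $f:=\alpha(X_H)$ were not globally constant on $\Sigma_E$, the relation $R=X_H/f$ could make the $X_H$-period vary across orbits while the $R$-period stays fixed. To close the gap I would exploit the $X_H$-invariance of $\alpha$ built into the averaging step of the Sullivan construction, which is available precisely because every orbit of $X_H$ is closed; this invariance forces $f$ to be a first integral of $X_H$, so that the $X_H$-period along any orbit $\gamma$ equals $T/f(\gamma)$. The overlapping decomposition $\mathcal{F}=\bigcup_i\mathcal{F}_i$ together with the connectedness of $\Sigma_E$ then upgrades this leafwise constancy of $f$ to global constancy, yielding the common Hamiltonian period asserted by Theorem~\ref{thm: Hamiltonian case}.
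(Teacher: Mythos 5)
Your skeleton (contact form on $\Sigma_E$ with Reeb field proportional to $X_H$, tautness via Proposition \ref{prop: almost geodesic foliation}, the Wadsley-type theorem, then connectedness) is the paper's, but two steps have genuine gaps. First, the source of the one-form $\alpha$: you invoke Proposition \ref{prop: geodesible condition Sullivan} ``applied to the periodic flow of $X_H$'' to \emph{supply} $\alpha$, but that proposition is only an equivalence --- it produces such a form only once $X_H|_{\Sigma_E}$ is known to be geodesible, and closedness of all orbits alone does not give geodesibility (this is precisely why the General Theorem carries the taut/geodesible hypotheses; the counterexamples to the periodic orbit conjecture are flows with all orbits closed that admit no such $\alpha$). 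The paper instead manufactures $\alpha$ from the symplectic geometry: a (local) Liouville field $Y$ transverse to $\Sigma_E$ gives $\alpha=(\iota_Y\omega)|_{\Sigma_E}$ with $d\alpha=\omega|_{\Sigma_E}$, and it is this identity --- not the periodicity hypothesis --- that justifies your unproved assertion that $\ker d\alpha=\Rbb X_H$, hence that $\alpha$ is contact and $X_H$ is proportional to its Reeb field. Relatedly, Proposition \ref{prop: almost geodesic foliation} gives tautness only of each piece $\mathcal{F}_s$, not of the whole orbit foliation of $\Sigma_E$; the paper never claims global tautness, but applies the rigidity statement piecewise (Proposition \ref{prop: almost same period}) and patches by connectedness, whereas you apply Theorem \ref{thm: geodesible condition} globally to a foliation whose global tautness you have not established.

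Second, the factor $f=\alpha(X_H)$: you correctly flag it as the main difficulty, but your fix does not work. Averaging $\alpha$ along orbits to get $\mathcal{L}_{X_H}\alpha=0$ presupposes control of the period function (continuity/boundedness), which is essentially the statement being proved; $\iota_{X_H}d\alpha=0$ by itself gives no information on $X_H f$ (inserting $Y=X_H$ into the formula for $d\alpha(X_H,Y)$ yields a triviality); and if you really had both $\mathcal{L}_{X_H}\alpha=0$ and $\iota_{X_H}d\alpha=0$, then $df=\mathcal{L}_{X_H}\alpha-\iota_{X_H}d\alpha=0$, so $f$ would already be globally constant and your final step would be moot. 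That final step is in any case invalid as written: ``$f$ constant along each orbit'' plus connectedness of $\Sigma_E$ does not upgrade to global constancy (any nonconstant first integral on $\Sigma_E$ is leafwise constant). The paper removes the difficulty at the source: it replaces $Y$ by $\widetilde{Y}=Y+X_K$, choosing $K$ by the method of characteristics (possible since $dH\neq0$ near $\Sigma_E$) so that $\omega(\widetilde{Y},X_H)=\widetilde{\alpha}(X_H)$ is constant on the relevant foliated piece; then the Reeb field of $\widetilde{\alpha}$ is a genuine constant multiple of $X_H$, the common Reeb period transfers directly to a common Hamiltonian period on each $\mathcal{F}_s$, and the overlapping decomposition plus connectedness globalizes it. To repair your proof you would need either this correction of the Liouville field or some other argument making $\alpha(X_H)$ constant, not an averaging appeal.
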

%The energy surface is denoted as  
%\beq
%\Sigma_E := \{ (q, p) \in T^{*}M |\ H(q, p) =E \}
%\eeq
%for $E > \min H(q, p)$. 
We present concrete examples of periodic flows in section \ref{sec: example}. 

As an application to quantum mechanics, we consider Shr\"odinger Operators  on $L^2(\Rbb^n)$: 
\beq P_{A, V} = \sum_{1\leq j \leq n}  (\hbar D_{x_j} - A_j({\bf{x}}))^2 +V({\bf {x}}), \quad {\bf x}=(x_1, \ldots, x_n) \in \Rbb^n, \eeq
where $\hbar$ is Planck's constant, $D_{x_j} = \frac{1}{i} \frac{\partial}{\partial x_j}$, $A_j({\bf{x}})\ (j=1, \ldots, n)\in C^{\infty}(\Rbb^n)$, and $V({\bf{x}}) \in C^{\infty}(\Rbb^n).$  
Additional suitable assumptions are made in section \ref{sec: applications}. 
When the spectrum of $P_{A, V}$ near $E$ consists only of eigenvalues, we denote the set of such 
eigenvalues as $\Lambda_{c}(\hbar) = \{ E(\hbar)| \ |E-E(\hbar)| <c \hbar^{1-\delta},\ E(\hbar) \in \sigma_{\rm p}(P_{A, V})\}$ for small $\hbar>0$. For  fixed constants $c>0$ and $0<\delta<1/2$,  the cluster of the spectrum near $E$ (the so--called essential\ difference\ spectrum)  is defined by  
$$
\sigma_{\mbox{e.d.}} (E) := \lim_{\hbar\rightarrow +0} \overline{ \left\{ \frac{E_k(\hbar) -E_l(\hbar)}{\hbar} \Big| \ E_k(\hbar), E_l(\hbar) \in \Lambda_{c}(\hbar) \right\}}^{\rm ess}. 
$$
Here the notation $\displaystyle\lim_{\hbar\rightarrow +0}\overline{\{\cdots\}}^{\rm ess}$ denotes the set of cluster points as $\hbar \to +0$ (accumulation points), that is, 
$\tau \in \sigma_{\mbox{e.d.}} (E)$ if and only if for all small $\widetilde{\varepsilon} >0$ and small $\widetilde{\hbar}>0$, there exist $\hbar$--dependent eigenvalues $E_k(\hbar), E_l(\hbar) \in  \Lambda_{c}(\hbar)\ (\hbar<\widetilde{\hbar})$ such that $\left|\tau - \left| \frac{E_k(\hbar) -E_l(\hbar)}{\hbar} \right|\right|<\widetilde{\varepsilon}$. 

We then find that either $\sigma_{\mbox{e.d.}} (E)=\Rbb$ or $\displaystyle\sigma_{\mbox{e.d.}} (E)=\frac{2\pi}{T}\Nbb$.
In the former case, at least one of the Hamiltonian flows on $\Sigma_E$ is {\it not periodic}. 
In the latter case,  all Hamiltonian flows on $\Sigma_E$ are {\it periodic} with a common period $T$ (Theorem \ref{thm: semiclassical ev}). 

The remainder of this paper is organized as follows: Section \ref{sec: Proof of Theorems} describes the proofs of Theorem \ref{thm: geodesible condition} and Theorem \ref{thm: Hamiltonian case}. Concrete examples are presented in Section \ref{sec: example}. As an application, Section \ref{sec: applications} introduces a Helton-Guillemin type theorem for semiclassical Shr\"odinger operators on $L^2(\Rbb^n)$ (see \cite{Helton, Guillemin} for the relationship between geodesic flows and the essential difference spectrum of the Laplacian on a manifold). Some lemmas are presented in the appendices for the readers' convenience. 
%%%%%%%%%%%%%%%%%%%%%%%%%%%%%%%%%%%%%%%%%%%%%%%%%%%%%%%%%%%%%%%%%%%%%%%%%%%%%%%%%%%%%%%%%%
\section{Proofs of Theorem \ref{thm: geodesible condition} and Theorem \ref{thm: Hamiltonian case} }\label{sec: Proof of Theorems}
As mentioned in the introduction, our next task is to identify the conditions under which the flow induced by $\mu$ is geodesible, and the corresponding $\mu$--foliation is taut. Let us recall a key definition and some essential propositions. 

\begin{definition}\label{def: geodesible}
Let $X$ be a nowhere-vanishing vector field on a manifold $M$. The vector field $X$ is said to be geodesible if there exists a Riemmanian metric $g$ such that
\beq
|X|=1 \quad \mbox{and} \quad \nabla_X X=0,
\eeq
where $\nabla$ denotes the Levi--Civita connection associated with $g$.  
\end{definition}

When no confusion arises, a one-dimensional foliation is also called geodesible (resp. geodesic) if it is spanned by a geodesible (resp. geodesic) vector field.  Here it is convenient to recall a proposition proposed by Sullivan \cite{Sullivan}. The detailed proof can be found in \cite{Becker, CV}. 

\begin{prop}\label{prop: geodesible condition Sullivan}
Let $X$ be a nowhere-vanishing vector field on a manifold $M$. Then $X$ is geodesible if and only if there esists a one--form $\alpha$ on $M$ 
such that $\alpha(X)=1$ and $i_X d\alpha =0$. 
\end{prop}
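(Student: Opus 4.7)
The plan is to prove both implications by exploiting the natural pairing between the vector field $X$ and a one-form via a Riemannian metric. For the forward direction, assume $X$ is geodesible with metric $g$, and let $\alpha$ be the metric dual of $X$, namely $\alpha(Y) := g(X, Y)$. Then $\alpha(X) = g(X, X) = 1$ is immediate from $|X| = 1$. To verify $i_X d\alpha = 0$, I would compute $d\alpha(X, Y) = X\alpha(Y) - Y\alpha(X) - \alpha([X,Y])$, expand $X g(X, Y)$ and $g(X, [X, Y])$ using metric compatibility and the torsion-free property of $\nabla$, and use $\nabla_X X = 0$ together with $Y g(X, X) = 0$ (which holds because $g(X, X) \equiv 1$) to make the terms cancel.

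For the converse, suppose $\alpha$ satisfies $\alpha(X) = 1$ and $i_X d\alpha = 0$. The key construction is to build a Riemannian metric that makes $\alpha$ the metric dual of $X$. To do this, I would use the splitting $TM = \mathbb{R}X \oplus \ker\alpha$ (well-defined since $\alpha(X) = 1$), declare $X$ to be a unit vector orthogonal to $\ker\alpha$, and choose any auxiliary inner product on $\ker\alpha$ (for instance, by restricting an arbitrary background Riemannian metric and then applying Gram--Schmidt with respect to $X$). With this metric $g$, one checks directly that $\alpha(Y) = g(X, Y)$ for every $Y$, by decomposing $Y$ according to the splitting.

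It then remains to prove $\nabla_X X = 0$ for this $g$. The main computation I would carry out is the Koszul formula specialized to $Y = X$: for every test vector $Z$,
\[
2 g(\nabla_X X, Z) = 2 X g(X, Z) - Z g(X, X) - 2 g([X, Z], X) = 2\bigl(X\alpha(Z) - \alpha([X, Z])\bigr),
\]
where I used $g(X, X) \equiv 1$ and $g(X, \cdot) = \alpha(\cdot)$. Recognizing $X\alpha(Z) - \alpha([X, Z]) = d\alpha(X, Z) + Z\alpha(X) = d\alpha(X, Z)$ (since $\alpha(X) = 1$ is constant), the hypothesis $i_X d\alpha = 0$ forces the right-hand side to vanish for all $Z$, whence $\nabla_X X = 0$.

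The principal obstacle is the converse: producing the metric from $\alpha$ requires using both pieces of data, the normalization $\alpha(X) = 1$ and the degeneracy condition $i_X d\alpha = 0$, in a delicate way. Neither condition alone suffices, and the argument above shows that the first supplies the correct length of $X$ and makes $\alpha$ the dual form, while the second is precisely what the Koszul formula needs to kill the acceleration $\nabla_X X$. The flexibility in choosing the metric on $\ker\alpha$ underlines that geodesibility of $X$ only constrains the geometry transverse to the flow up to this single compatibility condition.
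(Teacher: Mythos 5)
Your proof is correct and follows essentially the same route as the paper: the metric dual $\alpha = i_X g$ in the forward direction, and in the converse the same metric built from the splitting $TM=\Rbb X\oplus\ker\alpha$, with your Koszul-formula computation being equivalent to the paper's Lemma A1 identity $d\alpha(X,\cdot)=g(\nabla_X X,\cdot)-\tfrac12 d(|X|^2)$. Your explicit verification that $\alpha(Y)=g(X,Y)$ for the constructed metric is a small point the paper leaves implicit, but it does not change the argument.
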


Thus, all Reeb fields are geodesible. 
For the readers' convenience, the proof of Proposition \ref{prop: geodesible condition Sullivan} is included in Appendix A. 
Theorem \ref{thm: geodesible condition} follows directly from Proposition \ref{prop: geodesible condition Sullivan} and the general theorem.  %\ref{thm: main}  

To prove Theorem \ref{thm: Hamiltonian case}, we show that the energy surface $\Sigma_E$ is a contact manifold with a corresponding  contact form $\alpha$ such that the Reeb field $R$ is proportional to the Hamiltonian vector field $X_H$, that is, $R =(Const.) X_H$. To achieve this, we consider the cotangent bundle as a symplectic manifold $(T^*M, \omega)$. We also consider a local Liouville (or Euler) vector field $Y$ that is transversal to $\Sigma_E$, namely, $\mathcal{L}_Y \omega =\omega$ in a small  tubular neighborhood of $\Sigma_E$ (see e.g., \cite{BLM, C}). By defining $\alpha = i_Y \omega$, $d \alpha = d (\iota_Y \omega) =\omega$ and $\alpha \wedge (d\alpha)^{n-1} =  \iota_Y \omega \wedge \omega^{n-1} =\frac{1}{n} \iota_Y \omega^{n} $ is a volume form on any surface transverse to $Y$. 
It follows that $\alpha \wedge (d\alpha)^{n-1}$ is non-degenerate and $\alpha =\iota_Y \omega|_{\Sigma_E} $ is a contact form on $\Sigma_E$.  

We recall that the hypersurface $S$ is of contact type if and only if there exists 
a nowhere-vanishing function $f: S\rightarrow \Rbb$ such that $f X_H$ is a Reeb vector field. 
We find that $\Sigma_E$ is of contact type since we define $\alpha:=(\iota_Y \omega)|_{\Sigma_E}$ by a local Liouville field $Y$ and 
$\iota_{X_H} d\alpha := d\alpha|_{\Sigma_E} (X_H, \cdot)  = \omega|_{\Sigma_E} (X_H, \cdot) = dH|_{\Sigma_E} =0$. Thus the Hamiltonian vector field 
on $\Sigma_E$ becomes the Reeb field up to scale. 

Also we obtain from Cartan formula $\mathcal{L}_Y =d \iota_Y + \iota_Y d$ that 
\beq
\mathcal{L}_Y \omega = d \iota_Y \omega + \iota_Y d \omega =  d \iota_Y \omega +0 = \omega,
\eeq
and 
\beq\label{eq: Reeb flow defined by Liouville vector field}
\alpha(X_H) = \iota_Y \omega(X_H)=-\omega(X_H, Y)=-\iota_Y dH \not=0. 
\eeq

We define an alternative Liouville vector field by $\widetilde{Y}$ by $\widetilde{Y}=Y+X_K$ such that $\omega(\widetilde{Y}, X_H) =Const.$ where $X_K$ is a Hamiltonian vector field induced by a different Hamiltonian $K$. 
It then follows from \eqref{eq: Reeb flow defined by Liouville vector field} that $R=(Const. ) X_H$
under the corresponding contact form $\widetilde{\alpha} := (\iota_{\widetilde{Y}} \omega)|_{\Sigma_E}$.  
In fact, 
\beq
\mathcal{L}_{\widetilde{Y}} \omega = d \iota_{\widetilde{Y}} \omega + \iota_{\widetilde{Y}} d \omega =  d \iota_{\widetilde{Y}} 
\omega +0 =  d \iota_{{Y}} \omega + d \iota_{{X_K}} \omega =\omega+0 ,
\eeq
since $d(\iota_{X_K} \omega) = d(-dK)=0$. Thus $\widetilde{Y}$ is an alternative Liouville vevtor field. 

To solve $\omega(\widetilde{Y}, X_H) =Const.$, we choose a submanifold $\mathcal{F} \subset \Sigma_E$ that is 
foliated by circles induced by the Hamiltonian flow. We define the local Hamiltonian field $X_K$ on the neighbourhood of $\mathcal{F}$ such that $\omega(X_K, X_H) =Const.-\iota_Y dH$ near ${\mathcal{F}}$.  This is achieved using the method of characteristics near the submanifold $\mathcal{F}$, that is, $H$, $X_H$ and $\iota_Y dH$ are given and the first order PDE: $\omega(X_K, X_H) =Const. -\iota_Y dH $ is denoted as $dH (X_K) = Const. - \iota_Y dH$. 
This equation is solvable since  it follows from $dH \not=0$ near $\mathcal{F}$ that the $X_K$ is transversal to $\mathcal{F}$ for at least one suitable constant. Thus, we can find the Hamiltonian $K$ satisfying these conditions.  

As a result, the Hamiltonian vector fields and the Reeb fields are locally proportional to each other under a suitable one form $\widetilde{\alpha}$. 

\begin{prop}\label{prop: almost geodesic foliation}
Assume that $\Sigma_E$ is compact and foliated by $\mu$--circles. There exist at most finite many circle foliations $\{{\mathcal{F}}_s\}_s$ such that each ${\mathcal{F}}_s$ is taut and 
$\displaystyle \Sigma_E = \bigcup_{s} {\mathcal{F}}_s$. 
\end{prop}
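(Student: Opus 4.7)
The plan is to cover the compact energy surface $\Sigma_E$ by finitely many open pieces, on each of which the Hamiltonian vector field $X_H$ is a constant multiple of a Reeb field. On each such piece Proposition \ref{prop: geodesible condition Sullivan} then forces $X_H$ to be geodesible, which makes the induced circle foliation taut; compactness of $\Sigma_E$ turns this into a finite cover.

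Fix an arbitrary point $p\in \Sigma_E$ and choose a tubular neighborhood $V_p\subset T^*M$ of $p$ on which a local Liouville vector field $Y_p$ exists (so $\mathcal{L}_{Y_p}\omega=\omega$) and $dH$ does not vanish. Following the construction sketched before the statement, I look for a smooth Hamiltonian $K_p$ defined on a sub-neighborhood of $p$ such that $\widetilde{Y}_p := Y_p + X_{K_p}$ remains Liouville (automatic, since $X_{K_p}$ is symplectic) and in addition satisfies $\omega(\widetilde{Y}_p,X_H)\equiv C_p$ for some preassigned nonzero constant $C_p$. Since $\omega(X_{K_p},X_H)=dH(X_{K_p})$, this amounts to the first-order linear PDE
\begin{equation}
dH(X_{K_p}) \;=\; C_p - \iota_{Y_p}dH.
\end{equation}
Because $dH\neq 0$ in a neighborhood of $\Sigma_E$, the constant $C_p$ can be chosen so that the right-hand side is nowhere zero on a possibly smaller open neighborhood $U_p\subset V_p$; the equation is then non-characteristic along a hypersurface transverse to $X_H$, and it is solvable there by the method of characteristics.

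The resulting one-form $\widetilde{\alpha}_p := (\iota_{\widetilde{Y}_p}\omega)\big|_{\Sigma_E\cap U_p}$ is contact, since $d\widetilde{\alpha}_p = \omega|_{\Sigma_E\cap U_p}$ by the Liouville property; moreover $\iota_{X_H}d\widetilde{\alpha}_p = dH|_{\Sigma_E} = 0$ and $\widetilde{\alpha}_p(X_H) = -\iota_{\widetilde{Y}_p}dH = -C_p$ is a nonzero constant. Hence $-X_H/C_p$ is the Reeb field associated with $\widetilde{\alpha}_p$, and Proposition \ref{prop: geodesible condition Sullivan} shows that $X_H$ is geodesible on $\Sigma_E\cap U_p$; equivalently, the induced circle foliation of $\Sigma_E\cap U_p$ is taut. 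Applying compactness of $\Sigma_E$ to the open cover $\{U_p\cap \Sigma_E\}_{p\in\Sigma_E}$, one extracts a finite subcover $\{U_{p_1}\cap\Sigma_E,\ldots,U_{p_N}\cap\Sigma_E\}$, and setting $\mathcal{F}_s := U_{p_s}\cap\Sigma_E$ yields the desired (in general overlapping) decomposition $\Sigma_E=\bigcup_{s=1}^{N}\mathcal{F}_s$ with each $\mathcal{F}_s$ taut.

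The main obstacle is the local PDE step: one has to pick the constant $C_p$ so that $C_p - \iota_{Y_p}dH$ is nowhere vanishing on a genuine open neighborhood of $p$, and then verify that the vector field produced by the method of characteristics comes from a smooth Hamiltonian $K_p$ on that neighborhood. Once this local pointwise construction is in hand, the proposition reduces to extracting a finite subcover, which is immediate from the compactness of $\Sigma_E$.
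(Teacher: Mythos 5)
Your route is genuinely different from the paper's. The paper proves local tautness purely topologically: around each closed orbit it takes a saturated tube $\mathcal{F}_s\cong S^1\times D_s$ and uses the closed angle-type one-form $d\theta$, which is positive along the leaves and closed, so the criterion of Lemma \ref{lem: taut condition} applies; compactness of $\Sigma_E$ then yields the finite cover. You instead run the Liouville/contact construction and Proposition \ref{prop: geodesible condition Sullivan} pointwise. The symplectic part of your argument is fine on a small flow box, but it does not prove the statement, for the following reason.

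The pieces $\mathcal{F}_s=U_{p_s}\cap\Sigma_E$ you produce are flow-box type neighbourhoods: they are not saturated by the closed orbits, so the induced foliation of each $\mathcal{F}_s$ has leaves that are arcs, not circles. The proposition asserts a decomposition into \emph{circle} foliations, and this is exactly what is needed downstream (Proposition \ref{prop: almost same period} applies the Wadsley-type general theorem to each $\mathcal{F}_s$, and the proof of Theorem \ref{thm: Hamiltonian case} glues periods over overlapping $\mathcal{F}_s$); on a flow box these steps are vacuous. Nor can you simply enlarge $U_p$ to a saturated tube: integrating your transport equation $X_H(K_p)=\pm\bigl(C_p-\iota_{Y_p}dH\bigr)$ over a closed orbit $\gamma\subset U_p$ of period $T_\gamma$ forces the compatibility condition $C_p\,T_\gamma=\pm\oint_\gamma \iota_{Y_p}dH\,dt$, which up to sign is the action $\oint_\gamma\iota_{Y_p}\omega$ of $\gamma$; so a single constant $C_p$ on a saturated neighbourhood presupposes that the ratio action/period is locally constant, which is essentially the conclusion one is trying to reach. (Note also that for tautness alone the correction $\widetilde Y=Y+X_{K}$ buys you nothing: wherever a Liouville field $Y$ transverse to $\Sigma_E$ exists, $\alpha=\iota_Y\omega|_{\Sigma_E}$ already satisfies $\alpha(X_H)\neq 0$ and $\iota_{X_H}d\alpha=0$; the constant only matters later, when comparing the periods of the Reeb and Hamiltonian flows.) What is missing in your argument, and what the paper's proof supplies, is a taut structure on saturated tubes $S^1\times D_s$ around whole closed orbits, where the closed one-form $d\theta$ does the job without solving any PDE.
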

We emphasize that $\Sigma_E\ (n\geq 2)$ cannot be regarded as a cosphere bundle on some manifold,  
even if the Hamiltonian system has a ``Kinetic + Potential" structure (see e.g. \cite{MMO}).  
We impose a metric structure only on each foliation $\mathcal{F}$ rather than on the entire cosphere bundle.
\begin{proof}[Proof of Proposition \ref{prop: almost geodesic foliation}]
It is known \cite{Sullivan} that a foliation is taut if and only if there exists a one form $\tau$ such that 
\beq\label{eq: foliation direction>0}
\tau\ \text{(each foliation direction)} >0 
\eeq 
and  
\beq\label{eq: tangent 2 plane field=0} 
d\tau\ \text{(any $2$--plane tangent to foliation)}=0.
\eeq 
For the proof of these facts, see Appendix B. 
%It follows from $dH \not=0$ on $N$ that the orthogonal plane field to the Hamiltonian field automatically is invariant under the flow. 
To find such $\tau$ locally, let $\mathcal{F}_s$ be a small partial foliation in $N$ such that 
$S^1$ is a retraction of $\mathcal{F}_s$; that is, $\mathcal{F}_s \cong S^1 \times D_s$, where $D_s$ is contractible. 
As indicated by the de Rham cohomology ${\rm H^{1}_{dR}}(\mathcal{F}_s) \cong \Rbb$, there exists the so-called ``change in angle" type 
closed one--form  $\displaystyle d\theta =\tau$ satisfying the condition \eqref{eq: foliation direction>0}. 
Such a $\tau$ is closed and automatically satisfies condition \eqref{eq: tangent 2 plane field=0}. 

Let $\{D_s\}_s$ be a family of open sets such that $\displaystyle \bigcup_s \mathcal{F}_s \cong  \bigcup_s S^1 \times D_s$ covers $\Sigma_E$. Since $\Sigma_E$ is compact,  
at most a finite number of foliations of $\{\mathcal{F}_s\}_s$ cover $\Sigma_E$ as desired.
\end{proof}

\begin{prop}\label{prop: almost same period}
Under the same assumptions and notations as in Proposition \ref{prop: almost geodesic foliation}, if all flows on ${\mathcal{F}}_s$ are periodic, then all simple closed flows on each ${\mathcal{F}}_s$ have the same period.
\end{prop}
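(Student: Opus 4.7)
The plan is to apply the General Theorem (equivalently, Theorem \ref{thm: geodesible condition}) to each partial foliation $\mathcal{F}_s$ separately, taking the action $\mu$ to be the Hamiltonian flow $\exp tX_H$. The three hypotheses of that theorem are that every orbit is closed, that the foliation is taut, and that the generating vector field is geodesible. The first of these is by assumption, and tautness of each $\mathcal{F}_s$ is built into Proposition \ref{prop: almost geodesic foliation}, so what remains is to verify that $X_H|_{\mathcal{F}_s}$ is geodesible in the sense of Definition \ref{def: geodesible}.

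For that step I would recycle the alternative Liouville construction appearing in the paragraphs preceding Proposition \ref{prop: almost geodesic foliation}. On a neighbourhood of $\mathcal{F}_s$ in $T^*M$, choose $\widetilde{Y}=Y+X_K$ with $\omega(\widetilde{Y},X_H)\equiv c_s$ a nonzero constant, form the contact form $\widetilde{\alpha}_s := (\iota_{\widetilde{Y}}\omega)|_{\Sigma_E}$, and set $\alpha_s := c_s^{-1}\widetilde{\alpha}_s$. Then $\alpha_s(X_H)\equiv 1$, while
\[
\iota_{X_H}d\alpha_s \;=\; c_s^{-1}\,\omega|_{\Sigma_E}(X_H,\cdot) \;=\; c_s^{-1}\, dH|_{\Sigma_E}\;=\;0,
\]
so Proposition \ref{prop: geodesible condition Sullivan} produces a Riemannian metric on the neighbourhood with respect to which $X_H$ is geodesible. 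Invoking the General Theorem on the taut, geodesible, circle-foliated $\mathcal{F}_s$ then yields the asserted common period on that $\mathcal{F}_s$.

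The step I expect to be the main obstacle is arranging $\omega(\widetilde{Y},X_H)$ to be genuinely constant throughout the entire connected partial foliation $\mathcal{F}_s$, rather than merely along a local transversal to $X_H$. The method of characteristics used to solve $dH(X_K)=c_s - \iota_Y dH$ produces $K$ in a flow-box, and propagating the solution around the $S^1$ factor of $\mathcal{F}_s\cong S^1\times D_s$ must be performed consistently with no monodromy obstruction. Shrinking $D_s$ so that the circle foliation has trivial holonomy in the neighbourhood (which is permitted by the freedom left in Proposition \ref{prop: almost geodesic foliation}), together with the transversality of $X_K$ to $\mathcal{F}_s$ forced by $dH\ne 0$, should handle this; compactness of $\Sigma_E$ then limits us to finitely many such $\mathcal{F}_s$, so the construction applies uniformly.
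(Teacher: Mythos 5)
Your proposal is correct and follows essentially the same route as the paper: tautness of each $\mathcal{F}_s$ from Proposition \ref{prop: almost geodesic foliation}, geodesibility of $X_H$ via the alternative Liouville field $\widetilde{Y}=Y+X_K$ with $\omega(\widetilde{Y},X_H)=\mathrm{const.}$ and Sullivan's criterion (Proposition \ref{prop: geodesible condition Sullivan}), and then the General Theorem on the circle-foliated $\mathcal{F}_s$. Your extra attention to making the constant hold on all of $\mathcal{F}_s$ (monodromy around the $S^1$ factor) is a detail the paper's own proof passes over silently, but it does not change the argument.
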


\begin{proof}[Proof of Proposition \ref{prop: almost same period}]
We find that each ${\mathcal{F}}_s$ is taut from Proposition \ref{prop: almost geodesic foliation}, namely, 
there exists some Riemannian metric on ${\mathcal{F}_s}$ with respect to which every leaf is embedded 
as a totally geodesic submanifold of ${\mathcal{F}}_s$.

As previously mentioned, the Hamiltonian vector field is locally proportional to the Reeb field for some contact form. 
It follows from Proposition \ref{prop: geodesible condition Sullivan}, Proposition \ref{prop: almost geodesic foliation}, and Theorem \ref{thm: geodesible condition} that Hamiltonian flows, which generate $C^r$ action $\mu: \Rbb \times {\mathcal{F}_s} \rightarrow \mathcal{F}_s$,   
correspond to a  $C^r$ action $\rho : S^1 \times  {\mathcal{F}_s}  \backslash S \rightarrow  {\mathcal{F}_s} $ with the same orbits.
\end{proof}

\begin{proof}[Proof of Theorem \ref{thm: Hamiltonian case}]
As established by Proposition \ref{prop: almost same period}, the period is constant on each $\mathcal{F}_s$.
The small foliations are connected, that is, each component $\mathcal{F}_s$ has another connected component $\mathcal{F}_{\tilde{s}}$ (i.e,. $\mathcal{F}_s \cap \mathcal{F}_{\tilde{s}} \not= \phi$ for some $\tilde{s} \not=s$). Therefore,  
all Hamiltonian flows on $\Sigma_E$ have the same period as desired.  
\end{proof}

As a consequence of Theorem \ref{thm: Hamiltonian case} and its proof, the periods of the Hamiltonian flows are locally constant:   
\begin{cor}
The period of the Hamiltonian flow on $\Sigma_E$ is constant on each connected foliated component. 
\end{cor}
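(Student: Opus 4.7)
The plan is to obtain the corollary as a direct local consequence of Theorem~\ref{thm: Hamiltonian case} by restricting attention to one connected foliated component at a time. By a ``connected foliated component'' I take a connected subset $C \subseteq \Sigma_E$ which is itself foliated by $X_H$-circles, i.e.\ every Hamiltonian orbit meeting $C$ is closed and contained in $C$; in the most general setting these are the connected components of the (closed, $X_H$-invariant) locus of periodic points of the flow on $\Sigma_E$.

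First I would verify the hypotheses of Theorem~\ref{thm: Hamiltonian case} on any such $C$. Connectedness is built in; compactness follows because $C$ is a saturated closed subset of the compact set $\Sigma_E$; $dH \neq 0$ is assumed on a whole neighborhood of $\Sigma_E$ and so in particular on $C$; and every Hamiltonian flow on $C$ is periodic by the very definition of a foliated component. Applying Theorem~\ref{thm: Hamiltonian case} with $C$ in place of $\Sigma_E$ then yields a single common period $T_C$ shared by all orbits in $C$, which is exactly what the corollary asserts.

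Equivalently, one may simply re-read the last step of the proof of Theorem~\ref{thm: Hamiltonian case}. Proposition~\ref{prop: almost same period} already delivers a constant period on each small foliation chart $\mathcal{F}_s$ covering $C$, and since $C$ is connected one can chain overlapping charts $\mathcal{F}_{s_1}, \mathcal{F}_{s_2}, \ldots$ and propagate the common period throughout $C$. The only real obstacle I anticipate is definitional: one must pin down precisely what ``connected foliated component'' means and verify that such a component is closed (hence compact) in $\Sigma_E$, so that Theorem~\ref{thm: Hamiltonian case} or the covering argument legitimately applies. Once this is settled — most naturally by defining $C$ to be a connected component of the closed, $X_H$-invariant set of periodic points on $\Sigma_E$ — both arguments go through verbatim and yield the desired local constancy of the period.
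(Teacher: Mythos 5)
Your second argument is exactly the paper's (implicit) proof: the corollary is stated as a consequence of Theorem~\ref{thm: Hamiltonian case} \emph{and its proof}, i.e.\ one covers a connected foliated component by the small taut foliations $\mathcal{F}_s$ of Proposition~\ref{prop: almost geodesic foliation}, gets a constant period on each $\mathcal{F}_s$ from Proposition~\ref{prop: almost same period}, and propagates it through overlaps by connectedness (equivalently, the set of orbits with a given period is open and closed in the component). So on that route you are aligned with the paper.

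Your first route, however, rests on an unjustified premise: the locus of periodic points of the flow on $\Sigma_E$ need not be closed (periodic orbits can accumulate on a non-periodic orbit when the minimal periods are unbounded), so a connected foliated component $C$ need not be compact, and in any case $C$ is generally not a regular level hypersurface, so Theorem~\ref{thm: Hamiltonian case} cannot be applied ``with $C$ in place of $\Sigma_E$'' verbatim. This does not sink the corollary, because compactness is not actually needed: the chaining argument only uses connectedness of $C$ together with the local constancy of the period on each chart $\mathcal{F}_s$, and the constructions behind Propositions~\ref{prop: almost geodesic foliation} and~\ref{prop: almost same period} (the local contact form, geodesibility, and the ``change in angle'' one-form on $S^1\times D_s$) are local near a closed orbit. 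So keep the second argument and drop, or at least reformulate, the claim that $C$ is a closed saturated subset to which the global theorem applies.
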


%%%%%%%%%%%%%%%%%%%%%%%%%%%%%%%%%%%%%%%%%%%%%%%%%%%%%%%%%%%%%%%%%%%%%%%%%%%%%%%%%%%%%%%%%%%%%%%%%%%%%
\section{Examples}\label{sec: example}

Here, we present typical examples of periodic flows. 

\begin{example}[Harmonic Oscillator]
Let $M=\Rbb^n$ and $H(q, p) =\frac{p^2}{2m} + \frac{1}{2}k |q|^2 \in C^{\infty}(T^* \Rbb^n)$ $\quad (m>0, k>0)$.
Then the period is well--known to be $T=2\pi \sqrt{\frac{m}{k}}.$ This period $T$ is independent of the closed orbits.
\end{example}
In this case, one can find a transverse field of codimension-one planes that are invariant under the flow even on the phase space $T^* \Rbb^n \backslash O$ (See Fig. 1). 
Thus, all the flows except those at the origin have the same period, independent of $E>0$.  

\hspace{25mm}%WinTpicVersion4.32a
{\unitlength 0.1in%
\begin{picture}(27.1000,28.1000)(26.2000,-43.3000)%
% VECTOR 2 0 3 0 Black White  
% 2 2620 3020 5330 3020
% 
\special{pn 8}%
\special{pa 2620 3020}%
\special{pa 5330 3020}%
\special{fp}%
\special{sh 1}%
\special{pa 5330 3020}%
\special{pa 5263 3000}%
\special{pa 5277 3020}%
\special{pa 5263 3040}%
\special{pa 5330 3020}%
\special{fp}%
% VECTOR 2 0 3 0 Black White  
% 2 3910 4330 3910 1550
% 
\special{pn 8}%
\special{pa 3910 4330}%
\special{pa 3910 1550}%
\special{fp}%
\special{sh 1}%
\special{pa 3910 1550}%
\special{pa 3890 1617}%
\special{pa 3910 1603}%
\special{pa 3930 1617}%
\special{pa 3910 1550}%
\special{fp}%
% ELLIPSE 2 0 3 0 Black White  
% 4 3910 3010 4210 3370 4210 3370 4210 3370
% 
\special{pn 8}%
\special{ar 3910 3010 300 360 0.0000000 6.2831853}%
% CIRCLE 2 0 3 0 Black White  
% 4 3910 3010 4840 3340 4840 3340 4840 3340
% 
\special{pn 8}%
\special{ar 3910 3010 987 987 0.0000000 6.2831853}%
% ELLIPSE 2 0 3 0 Black White  
% 4 3920 3010 4610 3730 4610 3730 4610 3730
% 
\special{pn 8}%
\special{ar 3920 3010 690 720 0.0000000 6.2831853}%
% LINE 2 0 3 0 Black White  
% 2 4060 2770 4060 2680
% 
\special{pn 8}%
\special{pa 4060 2770}%
\special{pa 4060 2680}%
\special{fp}%
% LINE 2 0 3 0 Black White  
% 2 4060 2690 4160 2700
% 
\special{pn 8}%
\special{pa 4060 2690}%
\special{pa 4160 2700}%
\special{fp}%
% LINE 2 0 3 0 Black White  
% 2 4400 2560 4400 2470
% 
\special{pn 8}%
\special{pa 4400 2560}%
\special{pa 4400 2470}%
\special{fp}%
% LINE 2 0 3 0 Black White  
% 2 4400 2470 4510 2500
% 
\special{pn 8}%
\special{pa 4400 2470}%
\special{pa 4510 2500}%
\special{fp}%
% LINE 2 0 3 0 Black White  
% 2 4650 2400 4650 2350
% 
\special{pn 8}%
\special{pa 4650 2400}%
\special{pa 4650 2350}%
\special{fp}%
% LINE 2 0 3 0 Black White  
% 2 4650 2350 4730 2380
% 
\special{pn 8}%
\special{pa 4650 2350}%
\special{pa 4730 2380}%
\special{fp}%
% STR 2 0 3 0 Black White  
% 4 5140 2830 5140 2930 2 0 0 0
% $q$
\put(51.4000,-29.3000){\makebox(0,0)[lb]{$q$}}%
% STR 2 0 3 0 Black White  
% 4 3990 1550 3990 1650 2 0 0 0
% $p$
\put(39.9000,-16.5000){\makebox(0,0)[lb]{$p$}}%
% LINE 1 0 3 0 Black White  
% 4 3780 2830 3080 2160 4130 2860 5110 2400
% 
\special{pn 13}%
\special{pa 3780 2830}%
\special{pa 3080 2160}%
\special{fp}%
\special{pa 4130 2860}%
\special{pa 5110 2400}%
\special{fp}%
% STR 2 0 3 0 Black White  
% 4 4310 1780 4310 1880 2 0 0 0
% transversal plane invariant under the flow
\put(43.1000,-18.8000){\makebox(0,0)[lb]{transversal plane invariant under the flow}}%
% VECTOR 2 0 3 0 Black White  
% 4 4970 1920 5010 2390 4240 1870 3190 2160
% 
\special{pn 8}%
\special{pa 4970 1920}%
\special{pa 5010 2390}%
\special{fp}%
\special{sh 1}%
\special{pa 5010 2390}%
\special{pa 5024 2322}%
\special{pa 5005 2337}%
\special{pa 4984 2325}%
\special{pa 5010 2390}%
\special{fp}%
\special{pa 4240 1870}%
\special{pa 3190 2160}%
\special{fp}%
\special{sh 1}%
\special{pa 3190 2160}%
\special{pa 3260 2162}%
\special{pa 3241 2146}%
\special{pa 3249 2123}%
\special{pa 3190 2160}%
\special{fp}%
% STR 2 0 3 0 Black White  
% 4 3800 3060 3800 3160 2 0 0 0
% O
\put(38.0000,-31.6000){\makebox(0,0)[lb]{O}}%
% STR 2 0 3 0 Black White  
% 4 4490 4060 4490 4160 2 0 0 0
% Figure 1
\put(44.9000,-41.6000){\makebox(0,0)[lb]{Figure 1}}%
\end{picture}}%

\label{fig1}

Here, Bertrand's well-known theorem in classical mechanics states that the potentials $V(r)=-\frac{C_1}{r}$ or $V(r)={k}{r^2}$  have the property that all bound orbits are also closed orbits among central-force potentials (see e.g., \cite{OP, SST}). 
Although some such potentials are not smooth, we have the same period on the energy surface $\Sigma_E$.   
\begin{remark}[Kepler's law of planetary motion]
Let $M=\Rbb^3$ and $H(q, p) =\frac{p^2}{2m} - \frac{GMm}{|q|} \in C^{\infty}(T^* \Rbb^n \backslash \{q=0\})$, where $m, M>0, G>0$. 
It follows directly from Kepler's third law that $T^2 = \frac{4 \pi^2 a^3}{G(M+m)}$, where ${\displaystyle a}$ is the elliptical semi-major axis of the elliptical planetary orbit. Thus, all periods depend only on the total energy $E =-\frac{GMm}{2a}$ by direct calculations.

Notably, this fact can be demonstrated using the general theorem without direct calculations. 
Although Theorem \ref{thm: Hamiltonian case} is applicable only for $C^3$ Hamiltonians, the Kepler problem
in $\Rbb^3$ on an energy level E is equivalent to a reparameterization of the geodesic flow on the cosphere bundle 
of a manifold of constant curvature $-E$ with one point removed (see e.g., \cite{Milnor, Moser, MMO}). Thus,  
the periods depend only on the total energy $E$ from the general theorem.    
%\ref{thm: main}

\end{remark}
As a non-trivial example of other fields, we introduce the property 
of the period of the Lotka-Volterra equations, which are known as the model of  
autocatalytic chemical reactions or the dynamics of biological systems: 
\begin{example}[Lotka-Volterra System]
Let us consider the system of differential equations 
\beq\label{eq: Lotka-Volterra eq.}
\dot{x_j} = \epsilon_j x_j + \frac{1}{\beta_j} \sum_{k=1}^{n} a_{jk} x_j x_k \quad (j=1, \ldots, n)
\eeq
as a model for the competition of $n$ biological species. Here $x_j =x_j(t)>0$ stands for the population of the $j$--th species, $A=(a_{jk})$ 
is a constant square matrix of order $N$, $\epsilon_j \in \Rbb$ are constants. Then, any solution ${\bf x}(t) =(x_j(t))$ to \eqref{eq: Lotka-Volterra eq.} 
remains in the positive cone $\Rbb^n_+ \equiv \{ (x_j) \in \Rbb^n | x_j>0\ (j=1, \ldots n)\}$, because each coordinate plane in $\Rbb^n$,  denoted as $x_j=0$ for some $j=1, \ldots, n$, is an invariant set of \eqref{eq: Lotka-Volterra eq.}.
We assume that there is a unique equilibrium ${\bf{q}}=(q_1, \ldots, q_{n}) \in \Rbb^n_+$ for the system 
satisfying 
\beq\label{eq: equibrium eq}
\epsilon_j + \sum_{k=1}^{n} a_{jk} q_k =0. \
\eeq
We introduce new variables $Q_j$ (referred to as the Volterra quantity of life) defined by the formula: 
\beq\label{eq: Volterra quantity of life}
Q_j=\int_{0}^t x_j(\tau)d\tau,
\eeq
and 
\beq\label{eq: Volterra moment}
P_j =\log \dot{Q_j} -\frac{1}{2}\sum_{k=1}^{n} a_{jk} Q_k,\quad (j=1, \ldots, n),
\eeq
(which are well-defined if the original system is restricted to $\Rbb^n_+$). The Hamiltonian for the system is then given by:  
\beq\label{eq: Lotka Volterra Hamiltonian}
H=\sum_{j=1}^{n}\epsilon_j Q_j  - \sum_{j=1}^n \exp(P_j +\frac{1}{2}\sum_{k=1}^n a_{jk} Q_k). 
\eeq
Computations show that the system \eqref{eq: Lotka-Volterra eq.} can be rewritten in Hamiltonian form as (See e.g., \cite{DFO, FO}.)
\beq\label{eq: Lotka Volterra Hamiltonian equations}
\begin{cases}
\dot{P_j} &=\frac{\partial H}{\p Q_j} \\
\dot{Q_j} &= -\frac{\partial H}{\p P_j} 
\end{cases}
\quad (j=1, \ldots, n).
\eeq
It follows from Proposition \ref{prop: geodesible condition Sullivan} and the associated comments the Hamilton flow induced by the corresponding vector fields $X_H$ is geodesible on $\Sigma_{E}=\{ ({\bf{Q}}, {\bf{P}}) \in T^{*}\Rbb_{+}^n |\ H({\bf{Q}}, {\bf{P}})=E \}$. If the  flow defined by \eqref{eq: Lotka-Volterra eq.} in $\Rbb^n_+$ is periodic, it is known \cite{DFO} that  
\beq 
\lim_{T\to\infty} \frac{1}{T} \int_0^{T} {\bf x}(t) dt = {\bf q}. 
\eeq
Thus ${\bf x}(t)$ is periodic if and only if $\widetilde{\bf{Q}}(t)={\bf{Q}}(t) - {\bf q} t \not= {\bf 0}$ is periodic. 
It follows from  \eqref{eq: Volterra moment}  that 
$\widetilde{\bf{P}}(t) = {\bf{P}}(t) + t (\frac{1}{2}A{\bf q})  = (P_j + t (\frac{1}{2} \sum_{k=1}^n a_{jk} q_k))$ is also periodic. 
The flow restricted to small foliations $({\bf{Q}}(t), {\bf{P}}(t)) \in \mathcal{F} \subset \Sigma_E$ is geodesible,  
and $ (\widetilde{\bf{Q}}(t), \widetilde{\bf{P}}(t)) = ({\bf{Q}}(t), {\bf{P}}(t))+ ({\bf{a}}, {\bf{b}})t$ is periodic with constant vectors 
${\bf{a}} , {\bf{b}} \in \Rbb^n$. 
\vspace{2mm}

\hspace{30mm}%WinTpicVersion4.32a
{\unitlength 0.1in%
\begin{picture}(33.9100,22.2900)(14.2000,-27.1000)%
% SPLINE 2 0 3 0 Black White  
% 4 1420 2616 1974 1866 2534 1461 2534 1461
% 
\special{pn 8}%
\special{pa 1420 2616}%
\special{pa 1437 2588}%
\special{pa 1454 2561}%
\special{pa 1488 2505}%
\special{pa 1505 2478}%
\special{pa 1523 2450}%
\special{pa 1540 2423}%
\special{pa 1557 2395}%
\special{pa 1593 2341}%
\special{pa 1610 2314}%
\special{pa 1646 2260}%
\special{pa 1665 2234}%
\special{pa 1683 2208}%
\special{pa 1740 2130}%
\special{pa 1760 2105}%
\special{pa 1779 2080}%
\special{pa 1799 2055}%
\special{pa 1819 2031}%
\special{pa 1882 1959}%
\special{pa 1926 1913}%
\special{pa 1948 1891}%
\special{pa 1994 1847}%
\special{pa 2017 1826}%
\special{pa 2041 1805}%
\special{pa 2066 1785}%
\special{pa 2090 1765}%
\special{pa 2115 1745}%
\special{pa 2140 1726}%
\special{pa 2166 1706}%
\special{pa 2192 1687}%
\special{pa 2218 1669}%
\special{pa 2244 1650}%
\special{pa 2270 1632}%
\special{pa 2324 1596}%
\special{pa 2350 1578}%
\special{pa 2377 1561}%
\special{pa 2405 1543}%
\special{pa 2432 1526}%
\special{pa 2459 1508}%
\special{pa 2486 1491}%
\special{pa 2514 1474}%
\special{pa 2534 1461}%
\special{fp}%
% SPLINE 2 0 3 0 Black White  
% 4 2534 1461 4121 1383 4798 1544 4798 1544
% 
\special{pn 8}%
\special{pa 2534 1461}%
\special{pa 2566 1456}%
\special{pa 2598 1452}%
\special{pa 2631 1447}%
\special{pa 2663 1442}%
\special{pa 2695 1438}%
\special{pa 2727 1433}%
\special{pa 2760 1429}%
\special{pa 2792 1424}%
\special{pa 2824 1420}%
\special{pa 2856 1415}%
\special{pa 2889 1411}%
\special{pa 3017 1395}%
\special{pa 3050 1391}%
\special{pa 3082 1387}%
\special{pa 3146 1381}%
\special{pa 3178 1377}%
\special{pa 3274 1368}%
\special{pa 3306 1366}%
\special{pa 3338 1363}%
\special{pa 3434 1357}%
\special{pa 3466 1356}%
\special{pa 3498 1354}%
\special{pa 3562 1352}%
\special{pa 3594 1352}%
\special{pa 3626 1351}%
\special{pa 3690 1351}%
\special{pa 3721 1352}%
\special{pa 3753 1352}%
\special{pa 3817 1354}%
\special{pa 3848 1356}%
\special{pa 3880 1358}%
\special{pa 3911 1360}%
\special{pa 3943 1362}%
\special{pa 3975 1365}%
\special{pa 4006 1368}%
\special{pa 4038 1372}%
\special{pa 4100 1380}%
\special{pa 4132 1385}%
\special{pa 4163 1390}%
\special{pa 4195 1395}%
\special{pa 4288 1413}%
\special{pa 4320 1419}%
\special{pa 4382 1433}%
\special{pa 4413 1441}%
\special{pa 4444 1448}%
\special{pa 4568 1480}%
\special{pa 4599 1489}%
\special{pa 4631 1497}%
\special{pa 4662 1506}%
\special{pa 4693 1514}%
\special{pa 4755 1532}%
\special{pa 4786 1540}%
\special{pa 4798 1544}%
\special{fp}%
% SPLINE 2 0 3 0 Black White  
% 4 1440 2604 3027 2527 3704 2688 3704 2688
% 
\special{pn 8}%
\special{pa 1440 2604}%
\special{pa 1472 2599}%
\special{pa 1505 2595}%
\special{pa 1569 2585}%
\special{pa 1601 2581}%
\special{pa 1633 2576}%
\special{pa 1666 2572}%
\special{pa 1698 2567}%
\special{pa 1762 2559}%
\special{pa 1795 2554}%
\special{pa 1923 2538}%
\special{pa 1956 2535}%
\special{pa 2020 2527}%
\special{pa 2180 2512}%
\special{pa 2212 2510}%
\special{pa 2244 2507}%
\special{pa 2340 2501}%
\special{pa 2372 2500}%
\special{pa 2404 2498}%
\special{pa 2500 2495}%
\special{pa 2627 2495}%
\special{pa 2723 2498}%
\special{pa 2754 2500}%
\special{pa 2786 2502}%
\special{pa 2817 2504}%
\special{pa 2849 2506}%
\special{pa 2881 2509}%
\special{pa 2912 2512}%
\special{pa 2944 2516}%
\special{pa 2975 2520}%
\special{pa 3007 2524}%
\special{pa 3069 2534}%
\special{pa 3101 2539}%
\special{pa 3194 2557}%
\special{pa 3226 2564}%
\special{pa 3257 2570}%
\special{pa 3288 2577}%
\special{pa 3319 2585}%
\special{pa 3350 2592}%
\special{pa 3443 2616}%
\special{pa 3475 2624}%
\special{pa 3506 2633}%
\special{pa 3537 2641}%
\special{pa 3568 2650}%
\special{pa 3599 2658}%
\special{pa 3692 2685}%
\special{pa 3704 2688}%
\special{fp}%
% SPLINE 2 0 3 0 Black White  
% 4 3697 2710 4251 1961 4811 1556 4811 1556
% 
\special{pn 8}%
\special{pa 3697 2710}%
\special{pa 3714 2682}%
\special{pa 3731 2655}%
\special{pa 3765 2599}%
\special{pa 3783 2572}%
\special{pa 3800 2544}%
\special{pa 3817 2517}%
\special{pa 3835 2489}%
\special{pa 3852 2462}%
\special{pa 3906 2381}%
\special{pa 3924 2355}%
\special{pa 3942 2328}%
\special{pa 3961 2302}%
\special{pa 3979 2276}%
\special{pa 3998 2250}%
\special{pa 4018 2225}%
\special{pa 4037 2199}%
\special{pa 4057 2174}%
\special{pa 4077 2150}%
\special{pa 4097 2125}%
\special{pa 4118 2101}%
\special{pa 4138 2077}%
\special{pa 4160 2054}%
\special{pa 4181 2031}%
\special{pa 4203 2008}%
\special{pa 4248 1963}%
\special{pa 4272 1942}%
\special{pa 4295 1921}%
\special{pa 4343 1879}%
\special{pa 4368 1859}%
\special{pa 4393 1840}%
\special{pa 4418 1820}%
\special{pa 4443 1801}%
\special{pa 4495 1763}%
\special{pa 4521 1745}%
\special{pa 4548 1727}%
\special{pa 4574 1709}%
\special{pa 4655 1655}%
\special{pa 4682 1638}%
\special{pa 4710 1620}%
\special{pa 4764 1586}%
\special{pa 4792 1568}%
\special{pa 4811 1556}%
\special{fp}%
% STR 2 0 3 0 Black White  
% 4 1880 2375 1880 2430 2 0 0 0
% $\widetilde{\bf{Q}}, \widetilde{\bf{P}}$ Surface
\put(18.8000,-24.3000){\makebox(0,0)[lb]{$\widetilde{\bf{Q}}, \widetilde{\bf{P}}$ Surface}}%
% ELLIPSE 2 0 3 0 Black White  
% 4 3116 1927 3704 2116 3704 2116 3704 2116
% 
\special{pn 8}%
\special{ar 3116 1927 588 189 0.0000000 6.2831853}%
% LINE 2 0 3 0 Black White  
% 2 2747 2071 2719 2016
% 
\special{pn 8}%
\special{pa 2747 2071}%
\special{pa 2719 2016}%
\special{fp}%
% LINE 2 0 3 0 Black White  
% 2 2761 2088 2692 2105
% 
\special{pn 8}%
\special{pa 2761 2088}%
\special{pa 2692 2105}%
\special{fp}%
% LINE 2 0 3 0 Black White  
% 2 3615 1933 3656 1861
% 
\special{pn 8}%
\special{pa 3615 1933}%
\special{pa 3656 1861}%
\special{fp}%
% LINE 2 0 3 0 Black White  
% 2 3677 1861 3759 1883
% 
\special{pn 8}%
\special{pa 3677 1861}%
\special{pa 3759 1883}%
\special{fp}%
% SPLINE 1 1 3 0 Black White  
% 36 2903 2099 3396 2044 3608 1899 3519 1694 3171 1588 2624 1644 2521 1756 3054 1894 3697 1616 3252 1400 2740 1394 2500 1505 2500 1533 3458 1572 3683 1433 3314 1200 2925 1172 2658 1217 2508 1272 2713 1433 3382 1377 3622 1250 3683 1128 3547 1045 3164 967 2740 978 2494 1094 2678 1172 3677 883 2829 767 2473 873 3157 962 3656 673 3184 551 2740 540 2665 534
% 
\special{pn 13}%
\special{pn 13}%
\special{pa 2903 2099}%
\special{pa 2966 2099}%
\special{fp}%
\special{pa 3031 2098}%
\special{pa 3086 2096}%
\special{pa 3094 2095}%
\special{fp}%
\special{pa 3158 2090}%
\special{pa 3180 2088}%
\special{pa 3212 2084}%
\special{pa 3221 2083}%
\special{fp}%
\special{pa 3285 2072}%
\special{pa 3310 2067}%
\special{pa 3344 2058}%
\special{pa 3346 2057}%
\special{fp}%
\special{pa 3408 2040}%
\special{pa 3413 2039}%
\special{pa 3449 2026}%
\special{pa 3468 2019}%
\special{fp}%
\special{pa 3526 1991}%
\special{pa 3544 1980}%
\special{pa 3569 1960}%
\special{pa 3576 1952}%
\special{fp}%
\special{pa 3607 1897}%
\special{pa 3610 1886}%
\special{pa 3610 1856}%
\special{pa 3606 1835}%
\special{fp}%
\special{pa 3582 1775}%
\special{pa 3575 1763}%
\special{pa 3555 1733}%
\special{pa 3546 1723}%
\special{fp}%
\special{pa 3500 1678}%
\special{pa 3480 1663}%
\special{pa 3453 1647}%
\special{pa 3447 1644}%
\special{fp}%
\special{pa 3387 1621}%
\special{pa 3363 1614}%
\special{pa 3332 1607}%
\special{pa 3325 1606}%
\special{fp}%
\special{pa 3261 1597}%
\special{pa 3234 1594}%
\special{pa 3200 1591}%
\special{pa 3199 1591}%
\special{fp}%
\special{pa 3134 1584}%
\special{pa 3134 1584}%
\special{pa 3100 1580}%
\special{pa 3072 1577}%
\special{fp}%
\special{pa 3007 1571}%
\special{pa 2967 1567}%
\special{pa 2944 1566}%
\special{fp}%
\special{pa 2880 1566}%
\special{pa 2870 1566}%
\special{pa 2838 1568}%
\special{pa 2817 1571}%
\special{fp}%
\special{pa 2754 1584}%
\special{pa 2744 1586}%
\special{pa 2714 1596}%
\special{pa 2694 1605}%
\special{fp}%
\special{pa 2637 1635}%
\special{pa 2626 1643}%
\special{pa 2598 1664}%
\special{pa 2587 1674}%
\special{fp}%
\special{pa 2543 1720}%
\special{pa 2531 1737}%
\special{pa 2519 1760}%
\special{pa 2516 1777}%
\special{fp}%
\special{pa 2538 1834}%
\special{pa 2539 1835}%
\special{pa 2558 1849}%
\special{pa 2583 1861}%
\special{pa 2593 1864}%
\special{fp}%
\special{pa 2655 1882}%
\special{pa 2681 1887}%
\special{pa 2717 1892}%
\special{fp}%
\special{pa 2782 1898}%
\special{pa 2809 1900}%
\special{pa 2845 1902}%
\special{fp}%
\special{pa 2909 1902}%
\special{pa 2953 1900}%
\special{pa 2973 1899}%
\special{fp}%
\special{pa 3037 1895}%
\special{pa 3053 1894}%
\special{pa 3100 1889}%
\special{fp}%
\special{pa 3164 1881}%
\special{pa 3199 1876}%
\special{pa 3226 1871}%
\special{fp}%
\special{pa 3290 1861}%
\special{pa 3293 1860}%
\special{pa 3337 1850}%
\special{pa 3352 1846}%
\special{fp}%
\special{pa 3414 1830}%
\special{pa 3422 1828}%
\special{pa 3461 1816}%
\special{pa 3474 1811}%
\special{fp}%
\special{pa 3535 1788}%
\special{pa 3565 1775}%
\special{pa 3592 1762}%
\special{fp}%
\special{pa 3647 1727}%
\special{pa 3663 1715}%
\special{pa 3678 1698}%
\special{pa 3691 1682}%
\special{fp}%
\special{pa 3698 1621}%
\special{pa 3696 1615}%
\special{pa 3686 1598}%
\special{pa 3672 1581}%
\special{pa 3660 1571}%
\special{fp}%
\special{pa 3608 1533}%
\special{pa 3606 1531}%
\special{pa 3578 1515}%
\special{pa 3553 1503}%
\special{fp}%
\special{pa 3494 1476}%
\special{pa 3481 1470}%
\special{pa 3446 1457}%
\special{pa 3435 1453}%
\special{fp}%
\special{pa 3374 1432}%
\special{pa 3373 1432}%
\special{pa 3336 1421}%
\special{pa 3313 1415}%
\special{fp}%
\special{pa 3250 1400}%
\special{pa 3231 1395}%
\special{pa 3198 1389}%
\special{pa 3188 1388}%
\special{fp}%
\special{pa 3124 1379}%
\special{pa 3104 1377}%
\special{pa 3061 1374}%
\special{fp}%
\special{pa 2997 1373}%
\special{pa 2985 1373}%
\special{pa 2954 1374}%
\special{pa 2934 1375}%
\special{fp}%
\special{pa 2869 1380}%
\special{pa 2859 1381}%
\special{pa 2826 1385}%
\special{pa 2807 1387}%
\special{fp}%
\special{pa 2742 1394}%
\special{pa 2716 1397}%
\special{pa 2680 1402}%
\special{fp}%
\special{pa 2616 1413}%
\special{pa 2602 1417}%
\special{pa 2568 1428}%
\special{pa 2557 1434}%
\special{fp}%
\special{pa 2510 1476}%
\special{pa 2504 1487}%
\special{pa 2499 1516}%
\special{pa 2501 1538}%
\special{fp}%
\special{pa 2515 1600}%
\special{pa 2516 1604}%
\special{pa 2528 1627}%
\special{pa 2543 1648}%
\special{pa 2548 1653}%
\special{fp}%
\special{pa 2599 1693}%
\special{pa 2604 1696}%
\special{pa 2628 1707}%
\special{pa 2656 1716}%
\special{pa 2658 1716}%
\special{fp}%
\special{pa 2721 1728}%
\special{pa 2749 1731}%
\special{pa 2783 1732}%
\special{pa 2784 1732}%
\special{fp}%
\special{pa 2849 1729}%
\special{pa 2857 1729}%
\special{pa 2895 1726}%
\special{pa 2912 1724}%
\special{fp}%
\special{pa 2975 1714}%
\special{pa 3016 1706}%
\special{pa 3037 1701}%
\special{fp}%
\special{pa 3100 1687}%
\special{pa 3143 1676}%
\special{pa 3161 1671}%
\special{fp}%
\special{pa 3224 1654}%
\special{pa 3229 1652}%
\special{pa 3271 1638}%
\special{pa 3284 1634}%
\special{fp}%
\special{pa 3345 1614}%
\special{pa 3356 1610}%
\special{pa 3398 1595}%
\special{pa 3405 1592}%
\special{fp}%
\special{pa 3465 1569}%
\special{pa 3479 1564}%
\special{pa 3518 1548}%
\special{pa 3523 1546}%
\special{fp}%
\special{pa 3582 1519}%
\special{pa 3588 1516}%
\special{pa 3618 1499}%
\special{pa 3637 1487}%
\special{fp}%
\special{pa 3680 1440}%
\special{pa 3684 1428}%
\special{pa 3683 1409}%
\special{pa 3675 1390}%
\special{pa 3669 1381}%
\special{fp}%
\special{pa 3623 1337}%
\special{pa 3617 1332}%
\special{pa 3589 1313}%
\special{pa 3570 1302}%
\special{fp}%
\special{pa 3513 1273}%
\special{pa 3485 1260}%
\special{pa 3455 1248}%
\special{fp}%
\special{pa 3394 1225}%
\special{pa 3367 1216}%
\special{pa 3334 1206}%
\special{fp}%
\special{pa 3271 1189}%
\special{pa 3254 1185}%
\special{pa 3219 1179}%
\special{pa 3209 1177}%
\special{fp}%
\special{pa 3146 1168}%
\special{pa 3120 1166}%
\special{pa 3082 1165}%
\special{fp}%
\special{pa 3018 1165}%
\special{pa 2999 1166}%
\special{pa 2955 1169}%
\special{fp}%
\special{pa 2891 1176}%
\special{pa 2855 1181}%
\special{pa 2828 1185}%
\special{fp}%
\special{pa 2764 1196}%
\special{pa 2763 1196}%
\special{pa 2730 1203}%
\special{pa 2702 1208}%
\special{fp}%
\special{pa 2639 1221}%
\special{pa 2621 1225}%
\special{pa 2584 1234}%
\special{pa 2578 1236}%
\special{fp}%
\special{pa 2519 1262}%
\special{pa 2508 1272}%
\special{pa 2505 1289}%
\special{pa 2513 1308}%
\special{pa 2520 1316}%
\special{fp}%
\special{pa 2568 1358}%
\special{pa 2590 1372}%
\special{pa 2622 1390}%
\special{fp}%
\special{pa 2680 1419}%
\special{pa 2706 1430}%
\special{pa 2739 1442}%
\special{fp}%
\special{pa 2800 1461}%
\special{pa 2821 1467}%
\special{pa 2857 1474}%
\special{pa 2862 1475}%
\special{fp}%
\special{pa 2926 1483}%
\special{pa 2927 1483}%
\special{pa 2960 1484}%
\special{pa 2989 1483}%
\special{fp}%
\special{pa 3053 1476}%
\special{pa 3056 1476}%
\special{pa 3086 1471}%
\special{pa 3116 1465}%
\special{fp}%
\special{pa 3178 1448}%
\special{pa 3205 1440}%
\special{pa 3234 1430}%
\special{pa 3238 1429}%
\special{fp}%
\special{pa 3299 1407}%
\special{pa 3320 1399}%
\special{pa 3349 1389}%
\special{pa 3358 1385}%
\special{fp}%
\special{pa 3419 1364}%
\special{pa 3436 1358}%
\special{pa 3466 1347}%
\special{pa 3478 1342}%
\special{fp}%
\special{pa 3537 1314}%
\special{pa 3552 1306}%
\special{pa 3579 1287}%
\special{pa 3589 1279}%
\special{fp}%
\special{pa 3636 1234}%
\special{pa 3655 1211}%
\special{pa 3672 1183}%
\special{fp}%
\special{pa 3680 1121}%
\special{pa 3669 1103}%
\special{pa 3645 1084}%
\special{pa 3635 1079}%
\special{fp}%
\special{pa 3575 1054}%
\special{pa 3542 1044}%
\special{pa 3515 1036}%
\special{fp}%
\special{pa 3452 1019}%
\special{pa 3438 1015}%
\special{pa 3405 1008}%
\special{pa 3391 1005}%
\special{fp}%
\special{pa 3327 992}%
\special{pa 3312 989}%
\special{pa 3265 981}%
\special{fp}%
\special{pa 3201 972}%
\special{pa 3195 971}%
\special{pa 3166 967}%
\special{pa 3138 964}%
\special{fp}%
\special{pa 3074 956}%
\special{pa 3024 953}%
\special{pa 3011 953}%
\special{fp}%
\special{pa 2947 952}%
\special{pa 2932 952}%
\special{pa 2899 954}%
\special{pa 2884 955}%
\special{fp}%
\special{pa 2819 963}%
\special{pa 2793 967}%
\special{pa 2757 974}%
\special{fp}%
\special{pa 2694 989}%
\special{pa 2671 995}%
\special{pa 2634 1007}%
\special{fp}%
\special{pa 2573 1028}%
\special{pa 2557 1035}%
\special{pa 2528 1050}%
\special{pa 2518 1058}%
\special{fp}%
\special{pa 2506 1110}%
\special{pa 2506 1111}%
\special{pa 2527 1126}%
\special{pa 2556 1139}%
\special{pa 2561 1141}%
\special{fp}%
\special{pa 2622 1159}%
\special{pa 2632 1162}%
\special{pa 2674 1171}%
\special{pa 2684 1173}%
\special{fp}%
\special{pa 2748 1182}%
\special{pa 2762 1184}%
\special{pa 2806 1188}%
\special{pa 2811 1188}%
\special{fp}%
\special{pa 2875 1192}%
\special{pa 2895 1192}%
\special{pa 2938 1191}%
\special{fp}%
\special{pa 3003 1187}%
\special{pa 3066 1182}%
\special{fp}%
\special{pa 3130 1173}%
\special{pa 3157 1168}%
\special{pa 3192 1161}%
\special{fp}%
\special{pa 3255 1146}%
\special{pa 3279 1140}%
\special{pa 3316 1130}%
\special{fp}%
\special{pa 3377 1110}%
\special{pa 3390 1106}%
\special{pa 3425 1093}%
\special{pa 3437 1088}%
\special{fp}%
\special{pa 3496 1062}%
\special{pa 3517 1052}%
\special{pa 3544 1037}%
\special{pa 3551 1033}%
\special{fp}%
\special{pa 3605 997}%
\special{pa 3612 992}%
\special{pa 3630 977}%
\special{pa 3645 962}%
\special{pa 3652 954}%
\special{fp}%
\special{pa 3678 897}%
\special{pa 3678 891}%
\special{pa 3675 878}%
\special{pa 3669 865}%
\special{pa 3660 853}%
\special{pa 3649 844}%
\special{fp}%
\special{pa 3593 811}%
\special{pa 3593 811}%
\special{pa 3570 802}%
\special{pa 3544 794}%
\special{pa 3533 791}%
\special{fp}%
\special{pa 3471 776}%
\special{pa 3455 772}%
\special{pa 3421 767}%
\special{pa 3408 765}%
\special{fp}%
\special{pa 3344 757}%
\special{pa 3311 754}%
\special{pa 3281 752}%
\special{fp}%
\special{pa 3217 748}%
\special{pa 3190 747}%
\special{pa 3154 747}%
\special{fp}%
\special{pa 3089 747}%
\special{pa 3062 748}%
\special{pa 3026 750}%
\special{fp}%
\special{pa 2962 753}%
\special{pa 2899 759}%
\special{fp}%
\special{pa 2834 766}%
\special{pa 2800 771}%
\special{pa 2772 776}%
\special{fp}%
\special{pa 2708 787}%
\special{pa 2676 794}%
\special{pa 2647 800}%
\special{fp}%
\special{pa 2584 817}%
\special{pa 2571 821}%
\special{pa 2543 831}%
\special{pa 2525 839}%
\special{fp}%
\special{pa 2473 875}%
\special{pa 2468 884}%
\special{pa 2469 895}%
\special{pa 2476 905}%
\special{pa 2487 915}%
\special{pa 2500 923}%
\special{fp}%
\special{pa 2560 947}%
\special{pa 2574 951}%
\special{pa 2605 959}%
\special{pa 2621 962}%
\special{fp}%
\special{pa 2685 972}%
\special{pa 2716 976}%
\special{pa 2747 979}%
\special{fp}%
\special{pa 2812 983}%
\special{pa 2846 985}%
\special{pa 2875 985}%
\special{fp}%
\special{pa 2940 984}%
\special{pa 2940 984}%
\special{pa 2988 982}%
\special{pa 3003 981}%
\special{fp}%
\special{pa 3067 975}%
\special{pa 3085 973}%
\special{pa 3130 966}%
\special{fp}%
\special{pa 3193 954}%
\special{pa 3228 947}%
\special{pa 3255 940}%
\special{fp}%
\special{pa 3317 923}%
\special{pa 3319 922}%
\special{pa 3362 908}%
\special{pa 3377 902}%
\special{fp}%
\special{pa 3437 879}%
\special{pa 3443 876}%
\special{pa 3480 858}%
\special{pa 3494 851}%
\special{fp}%
\special{pa 3550 820}%
\special{pa 3574 804}%
\special{pa 3598 786}%
\special{pa 3601 783}%
\special{fp}%
\special{pa 3645 736}%
\special{pa 3649 731}%
\special{pa 3657 714}%
\special{pa 3661 697}%
\special{pa 3659 681}%
\special{pa 3657 677}%
\special{fp}%
\special{pa 3612 633}%
\special{pa 3602 627}%
\special{pa 3577 616}%
\special{pa 3554 608}%
\special{fp}%
\special{pa 3491 591}%
\special{pa 3482 589}%
\special{pa 3446 581}%
\special{pa 3430 578}%
\special{fp}%
\special{pa 3366 568}%
\special{pa 3330 564}%
\special{pa 3303 561}%
\special{fp}%
\special{pa 3239 555}%
\special{pa 3212 553}%
\special{pa 3176 550}%
\special{fp}%
\special{pa 3111 547}%
\special{pa 3103 547}%
\special{pa 3069 546}%
\special{pa 3048 546}%
\special{fp}%
\special{pa 2983 545}%
\special{pa 2973 545}%
\special{pa 2920 545}%
\special{fp}%
\special{pa 2856 544}%
\special{pa 2851 544}%
\special{pa 2821 544}%
\special{pa 2792 543}%
\special{fp}%
\special{pa 2728 539}%
\special{pa 2696 537}%
\special{pa 2665 534}%
\special{fp}%
% VECTOR 1 0 3 0 Black White  
% 2 3895 583 3649 656
% 
\special{pn 13}%
\special{pa 3895 583}%
\special{pa 3649 656}%
\special{fp}%
\special{sh 1}%
\special{pa 3649 656}%
\special{pa 3719 656}%
\special{pa 3700 641}%
\special{pa 3707 618}%
\special{pa 3649 656}%
\special{fp}%
% STR 2 0 3 0 Black White  
% 4 3943 556 3943 611 2 0 0 0
% $(\bf{Q}, \bf{P})$ flows in $\mathcal{F}$
\put(39.4300,-6.1100){\makebox(0,0)[lb]{$(\bf{Q}, \bf{P})$ flows in $\mathcal{F}$}}%
% VECTOR 2 0 3 0 Black White  
% 2 4310 2110 3690 1980
% 
\special{pn 8}%
\special{pa 4310 2110}%
\special{pa 3690 1980}%
\special{fp}%
\special{sh 1}%
\special{pa 3690 1980}%
\special{pa 3751 2013}%
\special{pa 3742 1991}%
\special{pa 3759 1974}%
\special{pa 3690 1980}%
\special{fp}%
% STR 2 0 3 0 Black White  
% 4 4350 2095 4350 2150 2 0 0 0
% $(\widetilde{\bf{Q}}, \widetilde{\bf{P}})$  periodic flows
\put(43.5000,-21.5000){\makebox(0,0)[lb]{$(\widetilde{\bf{Q}}, \widetilde{\bf{P}})$  periodic flows}}%
\end{picture}}%

It follows from equation \eqref{eq: Volterra quantity of life} that each elements of ${\mathbf{Q}} \in \Rbb_{+}^n$ is monotone increasing since $x_j(t) >0\ \mbox{for}\ t>0$. We impose a Riemannian metric on $\Sigma_E$ such that each flow is geodesible. 
There exists a small foliation $\mathcal{F} \subset \Sigma_E$ that can be considered a Riemannian covering of the small submanifold $N$ where $(\widetilde{\mathbf{Q}}, \widetilde{\mathbf{P}})$ flows exist.  We define the Riemannian covering 
as $\pi: {\mathcal{F}} \rightarrow N$, that is, the following conditions hold: (1) The map $\pi$ is a smooth covering map. 
(2) The map $\pi$ induces a local isometry. This ensures that the local foliation, which consists of the oribits 
$(\widetilde{\bf{Q}}(t), \widetilde{\bf{P}}(t))$, is geodesible.

The taut condition is the same as in the proof of Proposition \ref{prop: almost geodesic foliation}. Thus the periods of the Lotka--Volterra equation \eqref{eq: Lotka-Volterra eq.} depend only on the 
energy of Volterra Hamiltonian \eqref{eq: Lotka Volterra Hamiltonian}. 
(See e.g. \cite{KSY} for the sufficient conditions under which all solutions of \eqref{eq: Lotka-Volterra eq.} are periodic). 
%Then for small $|s-t| $, the distance can be considered locally as 
%\begin{align*}
%d(({\bf{Q}}_1(s), {\bf{P}}_1(s)),  ({\bf{Q}}_2(t), {\bf{P}}_2(t))) =& d((\widetilde{\bf{Q}}_1(s), \widetilde{\bf{P}}_1(s)),  (\widetilde{\bf{Q}}_2(t), \widetilde{\bf{P}}_2(t)))\\
%&+ |({\bf{a}}, {\bf{b}})||s-t|. 
%\end{align*} 
%Thus the foliations $\widetilde{\mathcal{F}}$ in the $(\widetilde{\bf{Q}}, \widetilde{\bf{P}})$ surface, which is a projection of ${\mathcal{F}}$, inherits a Riemannian metric and the flows on $\widetilde{\mathcal{F}}$ is geodesible. 
%It then follows from Proposition \ref{prop: almost geodesic foliation} and Proposition \ref{prop: almost same period} that 
%the periods of Lotka--Volterra equations \eqref{eq: Lotka-Volterra eq.} depend only on the Hamiltonian \eqref{eq: Lotka Volterra Hamiltonian}, that is, all flows have the same period defined on $\Sigma_E$ if all orbits are closed.  
\end{example}

%%%%%%%%%%%%%%%%%%%%%%%%%%%%%%%%%%%%%%%%%%%%%%%%%%%%%%%%%%%%%%%%%%%%%%%%5
\section{An application to semiclassical Schr\"odinger eigenvalues.}\label{sec: applications}
As an application of Therem \ref{thm: Hamiltonian case}, we can find the so-called difference spectrum of eigenvalues for semiclassical operators (See e.g.  \cite{Kuwabara} for the essential difference spectrum). We refer readers to \cite{DS} and the reference therein for the notion of semiclassical analysis.  
Although we may consider more general $h$--admissible operators, for simlicity, we focus only on magnetic Schr\"odinger operators with smooth potentials, usually denoted as  
\beq\label{eq: semiclassical Shroedinger op.}
P_{A, V}(\hbar) = \sum_{1\leq j \leq n}  (\hbar D_{x_j} - A_j({\bf{x}}))^2 +V({\bf {x}}), \quad {\bf x}=(x_1, \ldots, x_n) \in \Rbb^n. 
\eeq
where the number $\hbar \in \Rbb_{>0}$ represents Planck's constant. 
We make the following assumptions: 

(A1). \ The corresponding principal symbol $H({\bf{x}}, {\bf{p}}) =\sum\limits_{1\leq j \leq n}  (p_j - A_j({\bf{x}}))^2 +V({\bf {x}})  \in C^{\infty}(T^*\Rbb^n)$ is bounded from below. Consequently, $P_{A, V}(\hbar)$ is essentially self--adjoint in $L^2(\Rbb^n)$, starting from 
$C_0^{\infty}(\Rbb^n).$

(A2).\ $E$ is not a critical value of $H({\bf{x}}, {\bf{p}})$, that is, $dH \not=0$ near $\Sigma_{E}=H^{-1}(E)$.

%(A3).\ All Hamiltonian flows induced by $H$ are periodic on $\Sigma_{E}=H^{-1}(E)$.

(A3).\ The energy surface $\Sigma_{E}=H^{-1}(E)$ is connected.

(A4).\ There exists $\varepsilon >0$ such that $H^{-1} ([E-\varepsilon, E+\varepsilon])$ is compact in $T^{*}\Rbb^n$.

(A5).\ $\exists \gamma>0, \exists C>0$ and $\exists M>0$ such that $|V(x)| \leq C (V(y)+\gamma)(1+|x-y|)^{M}$ for all $x, y \in \Rbb^n$.

(A6).\ $\exists \gamma>0$ such that $\forall \alpha$, $\exists c_{\alpha}$, $|\partial^\alpha V(x)| \leq c_{\alpha}|V(x) +\gamma|$ and $|\partial^\alpha {\bf A}({x})|  \leq c_{\alpha}|V(x) +\gamma|^{1/2}$ for all $x \in \Rbb^n$.

% We denote by $S$ (resp. by $\mu$) the action (resp. the Maslov index) of a periodic trajectory in $\Sigma_E$. 
% Since $\Sigma_E$ is connected, they are independent of the trajectory. 
Then the spectrum of $P_{A, V}$ near $E$ consists only of eigenvalues. We denote the set of such 
eigenvalues as $\Lambda_{c}(\hbar) = \{ E(\hbar)| \ |E-E(\hbar)| <c \hbar^{1-\delta},\ E(\hbar) \in \sigma_{\rm p}(P_{A, V})\}$ for small $\hbar>0$. For  fixed constants $c>0$ and $0<\delta<1/2$,  the cluster of the spectrum near $E$ (the so--called essential\ difference\ spectrum)  is defined by  
$$
\sigma_{\mbox{e.d.}} (E) := \lim_{\hbar\rightarrow +0} \overline{ \left\{ \frac{E_k(\hbar) -E_l(\hbar)}{\hbar} \Big| \ E_k(\hbar), E_l(\hbar) \in \Lambda_{c}(\hbar) \right\}}^{\rm ess}. 
$$
where the notation $\displaystyle\lim_{\hbar \rightarrow +0}\overline{\{\cdots\}}^{\rm ess}$ denotes the set of cluster points (accumulation points). Then we find the following theorem.

\begin{theorem}[Guillemin-Helton Type theorem for semiclassical Schr\"odinger eigenvalues]\label{thm: semiclassical ev}
Under assumptions (A1) to (A6), 
\beq
\mbox{\rm either}\ \sigma_{\mbox{\rm e.d.}}(E)=\Rbb\quad \mbox{\rm or}\quad {\sigma_{\mbox{\rm e.d.}} (E)} = \frac{2\pi }{T} \Zbb.  
\eeq
In the former case, at least one of Hamiltonian flows is not periodic. In the latter case,  all Hamiltonian flows induced by $H$ are periodic on $\Sigma_{E}=H^{-1}(E)$ and 
the number $T>0$ denotes the minimal period. 
\end{theorem}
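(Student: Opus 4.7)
The plan is to reduce the quantum spectral dichotomy to the classical one already established. By Theorem \ref{thm: Hamiltonian case}, either some Hamiltonian orbit on $\Sigma_E$ fails to be periodic, or every orbit is periodic and the family shares a common minimal period $T$. I would match these two classical alternatives to the two options in the theorem by means of semiclassical trace and propagator techniques. Choose $\chi \in C_0^\infty(\mathbb{R})$ supported in a tiny neighborhood of $E$, equal to $1$ on a smaller neighborhood; assumptions (A1)--(A6) guarantee that $\chi(P_{A,V})$ is a trace-class $\hbar$-pseudodifferential operator with principal symbol $\chi \circ H$ microlocally concentrated near $\Sigma_E$. The basic object of study is the smoothed spectral propagator
$$
Z_\chi(t,\hbar) \;=\; \operatorname{Tr}\!\bigl(\chi(P_{A,V})\,e^{-itP_{A,V}/\hbar}\bigr) \;=\; \sum_k \chi(E_k(\hbar))\,e^{-itE_k(\hbar)/\hbar},
$$
whose $\hbar \to 0^+$ behaviour is governed, via the semiclassical Egorov theorem, by the classical Hamiltonian flow on $\Sigma_E$.

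In the periodic case, I would construct a microlocal parametrix for $U(T) = e^{-iTP_{A,V}/\hbar}$. Because Theorem \ref{thm: Hamiltonian case} supplies a single common period $T$ and the subsequent Corollary yields local constancy of periods, the underlying canonical transformation of $U(T)$ is the identity on $\Sigma_E$; hence, on the spectral window $\Lambda_c(\hbar)$, a quantum Birkhoff-type normal form represents $U(T)$ as multiplication by a single scalar phase $e^{-i\theta(\hbar)}$ modulo an error of size $o(\hbar^{\delta})$. Applied to any two eigenfunctions associated with $E_k(\hbar),E_l(\hbar)\in \Lambda_c(\hbar)$, this gives
$$
e^{-iT(E_k(\hbar)-E_l(\hbar))/\hbar} \;=\; 1 + o(1),
$$
so cluster points of $(E_k-E_l)/\hbar$ lie in $\frac{2\pi}{T}\mathbb{Z}$, yielding $\sigma_{\mbox{e.d.}}(E)\subseteq \frac{2\pi}{T}\mathbb{Z}$. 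A semiclassical Weyl estimate for the density of eigenvalues inside $\Lambda_c(\hbar)$ then forces every element of $\frac{2\pi}{T}\mathbb{Z}$ to be realized as a cluster point, giving equality. In the non-periodic case I would argue by contraposition: if $\sigma_{\mbox{e.d.}}(E)\neq \mathbb{R}$, pick a nonempty open $I \subset \mathbb{R}\setminus\sigma_{\mbox{e.d.}}(E)$, and for $\tau_0 \in I$ consider a Schwartz $\psi$ whose Fourier transform localizes near $\tau_0$; the operator $\int \widehat{\psi}(t)\,e^{-it\tau_0}\,e^{-itP_{A,V}/\hbar}\,dt$ is then of order $o(1)$ on the spectral window. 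Via Egorov this forces an invariance of the microlocal symbol under the Hamiltonian flow that, combined with the contrapositive of Theorem \ref{thm: Hamiltonian case}, contradicts the non-periodicity assumption. Hence $\sigma_{\mbox{e.d.}}(E) = \mathbb{R}$.

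The principal obstacle is the semiclassical normal-form construction of $U(T)$ uniformly on the window of width $\hbar^{1-\delta}$. One must show that the scalar phase $\theta(\hbar)$ is \emph{globally} well-defined on the connected energy surface $\Sigma_E$: subprincipal symbols and Maslov-type contributions must match across the finite cover by small foliations $\mathcal{F}_s$ furnished by Proposition \ref{prop: almost geodesic foliation}, which in turn relies on the tautness and geodesibility that make each $\mathcal{F}_s$ behave like a Reeb foliation (Theorem \ref{thm: geodesible condition}). Equally delicate is the quantitative control of the parametrix remainder: the error must remain $o(\hbar^{\delta})$ after being magnified by $1/\hbar$, so that the cluster spacing $2\pi/T$ is not blurred. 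It is precisely the equality of periods across all orbits of $\Sigma_E$, provided by Theorem \ref{thm: Hamiltonian case}, that enables a single $T$ to play both the classical and quantum roles and thereby makes the semiclassical Guillemin--Helton dichotomy work.
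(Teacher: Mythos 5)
Your overall architecture coincides with the paper's: reduce to the classical dichotomy, invoke Theorem \ref{thm: Hamiltonian case} to upgrade ``all orbits periodic'' to a single common period $T$, and then show the eigenvalues in the window $\Lambda_c(\hbar)$ organize into bands spaced $2\pi\hbar/T$ apart. The difference is that the paper discharges the two quantum steps by citation --- Combescure--Robert for ``either $\sigma_{\mbox{e.d.}}(E)=\Rbb$ or all flows on $\Sigma_E$ are periodic,'' and Dozias/Helffer--Robert for the clustering of $\mathrm{Spec}(P_{A,V})\cap[E-b\hbar^{1-\delta},E+b\hbar^{1-\delta}]$ into intervals of width $O(\hbar^{2-2\delta})$ centered at $-S/T+(\hbar/T)(-\tfrac{\pi}{2}\mu+2\pi k)$, together with the existence of eigenvalues in each band --- whereas you propose to rebuild those ingredients from scratch via a parametrix for $U(T)=e^{-iTP_{A,V}/\hbar}$, Egorov, and a quantum normal form. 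As a standalone proof this leaves genuine gaps, and you in fact name the main one yourself as an ``obstacle'' without resolving it: the uniform scalar-phase representation of $U(T)$ on a window of width $\hbar^{1-\delta}$, with error small enough that eigenvalue phases are pinned modulo $o(\hbar)$ after division by $\hbar$, and with the phase (action $S$ and Maslov index $\mu$) globally well defined on the connected $\Sigma_E$. Note also that the canonical map of $U(T)$ is the identity only on $\Sigma_E$ itself, not on a neighborhood (periods vary with $E$), so the construction on the shrinking window is exactly the delicate content of Dozias's theorem, not a routine application of Egorov.

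Two further steps are asserted rather than proved. First, ``a semiclassical Weyl estimate \ldots forces every element of $\frac{2\pi}{T}\Zbb$ to be realized'': a gross count of eigenvalues in $\Lambda_c(\hbar)$ does not preclude some bands from being empty; one needs per-cluster asymptotics (this is what the paper imports from Helffer--Robert), without which you only obtain the inclusion $\sigma_{\mbox{e.d.}}(E)\subseteq\frac{2\pi}{T}\Zbb$. Second, your converse direction (if $\sigma_{\mbox{e.d.}}(E)\neq\Rbb$ then all flows are periodic) is reduced to the phrase ``Egorov forces an invariance of the microlocal symbol,'' which is not an argument; this is precisely Helton's theorem in its semiclassical form (Combescure--Robert), whose proof analyzes the long-time behaviour of matrix elements/the propagator and is substantially harder than a one-line Egorov appeal. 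So the proposal is a reasonable roadmap that mirrors the literature the paper cites, but the three items above --- the uniform normal form for $U(T)$ on the $\hbar^{1-\delta}$ window, the nonemptiness of each cluster, and the non-periodic case --- are exactly the nontrivial theorems being used, and they are not established in your text.
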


\begin{proof}[Outline of Proof]

The spectrum of the Schr\"odinger operators \eqref{eq: semiclassical Shroedinger op.} has already been thoroughly investigated for smoother readability. In fact, it is known \cite{CR} that $$ \mbox{\rm either}\ \sigma_{\mbox{\rm e.d.}}(E)=\Rbb\quad \mbox{\rm or}\quad \mbox{all  Hamiltonian flows are periodic on}\ \Sigma_E .$$
This result is analogous to Helton's theorem \cite{Helton} on compact manifolds. We also refer to \cite{Miyanishi} and references therein.

If all Hamiltonian flows are periodic on $\Sigma_E$, Theorem \ref{thm: Hamiltonian case} implies that all periods are exactly  
the same. Consequently, according to the celebrated result \cite{Dozias} that 
for fixed constants $b >0$ and $0<\delta<1/2$, there exists constant $C>0$ such that 
\beq\label{eq: Shrodinger spectrum near E} 
\mbox{Spec}(P_{A, V}(\hbar)) \cap [E-b\hbar^{1-\delta}, E+b\hbar^{1-\delta}] \subset \bigcup_{k\in \Zbb} I_k^{\delta}(\hbar) \quad \forall\ \mbox{small}\ \hbar >0,   
\eeq
where 
\beq\label{eq: Shroedinger spectrum band}
I_k^{\delta}(\hbar)  = \left[ -\frac{S}{T} +\frac{\hbar}{T}\left( -\frac{\pi}{2}\mu + 2\pi k \right) -C h^{2-2\delta}.
-\frac{S}{T} +\frac{\hbar}{T}\left( -\frac{\pi}{2}\mu + 2\pi k\right) -C h^{2-2\delta} \right]. 
\eeq
Here $S$ (resp. $\mu$) denotes the action (resp. the Maslov index) of a periodic trajectory. We notice that 
the action $S$ and the Maslov index $\mu$ are independent of trajectories on $\Sigma_{E}$ since $\Sigma_E$ is connected.

The formulas \eqref{eq: Shrodinger spectrum near E} and \eqref{eq: Shroedinger spectrum band} indicate that the eigenvalues near $E$ is denoted as 
\beq\label{eq: eigenvalue asymptotics}
E_k(\hbar) \approx 
-\frac{S}{T} +\frac{\hbar}{T}\left( -\frac{\pi}{2}\mu + 2\pi k \right)\quad k\in \Zbb\quad \mbox{modulo}\ \hbar^{2-2\tilde{\delta}}. 
\eeq
for some $0 < \tilde{\delta} < 1/2$, we emphasize that the result of B. Helffer and D. Robert on the distribution \cite{HD1} 
assures the existence of $E_k(\hbar).$  In fact, S. Dozias employs their results to derive the formulas 
\eqref{eq: Shrodinger spectrum near E} and \eqref{eq: Shroedinger spectrum band}.  It then follows from \eqref{eq: eigenvalue asymptotics} that the  differences of eigenvalues are denoted as 
$\frac{E_k(\hbar)-E_l (\hbar)}{\hbar} \approx \frac{2\pi (k-l)}{T}$ modulo $\hbar^{1-2\tilde{\delta}}$\ ($k, l \in\Zbb$). 
Thus, the set of accumulation points of the differences among eigenvalues is ${\sigma_{\mbox{\rm e.d.}} (E)} = \frac{2\pi }{T} \Zbb$, as 
desired. 
\end{proof}
%%%%%%%%%%%%%%%%%%%%%%%%%%%%%%%%%%%%%%%%%%%%%%%%%%%%%%%%%%%%%%%%%%%%%%%%%%%%%%%%%%%

\section{Discussions}
Geodesic circle foliations on finite-dimensional manifolds are discussed. 
We then introduce that the periods of Reeb flows and Hamilton flows are exactly the same if the manifold is connected and 
all corresponding flows are periodic. 
Some examples and applications are presented. 

Many other equations can be formulated as such systems. We also expect similar results for infinite-dimensional systems,  
even though our current focus is on finite-dimensional manifolds. We will resume these topics in a separate article.

%%%%%%%%%%%%%%%%%%%%%%%%%%%%%%%%%%%%%%%%%%%%%%%%%%%%%%%%%%%%%%%%%%%%%%%%%%

\appendix %%%%%%%%%%%%%%%%%%%%%%%%%%%%%%%%%%%%%%%%%%%%%%%%%%%%%%%%%%%%%%%
\def\thesection{Appendix \Alph{section}:}
%\numberwithin{equation}{section}
%\makeatletter 
% "activate" the preparatory code, but for section-level headers only
%\newcommand{\section@cntformat}{Appendix \thesection:\ }
\makeatother

\section{Geodesible vector fields} \label{sec:counting}
\renewcommand{\theequation}{A.\arabic{equation}}
\setcounter{equation}{0}
 \setcounter{lemma}{0}
    \renewcommand{\thelemma}{\Alph{section}\arabic{lemma}}
Here we heavily borrow the proofs from \cite{Becker, CV, Sullivan}.
To show Proposition \ref{prop: geodesible condition Sullivan}, let us recall the following lemma.

\begin{lemma}\label{lemma: relation between one-form and metric tensor} 
Let $X$ be a vector field on a Riemannian manifold $M$, and let $\alpha := i_X g$. Then
\beq\label{eq: relation between one-form and metric tensor}
d\alpha(X, \cdot)=g (\nabla_X X, \cdot) - \frac{1}{2}d(|X|^2).
\eeq
\end{lemma}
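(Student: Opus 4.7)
The plan is to evaluate both sides on an arbitrary test vector field $Y$ and verify pointwise equality, using only the defining properties of the Levi--Civita connection (metric compatibility and vanishing torsion) together with the invariant Cartan formula for the exterior derivative of a $1$--form. Since both sides are tensorial in $Y$, it suffices to prove
\[
d\alpha(X, Y) \;=\; g(\nabla_X X, Y) \;-\; \tfrac{1}{2} Y\bigl(|X|^2\bigr)
\]
for any vector field $Y$ on $M$.

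The first step is to apply the standard identity
\[
d\alpha(X, Y) \;=\; X\,\alpha(Y) \;-\; Y\,\alpha(X) \;-\; \alpha\bigl([X, Y]\bigr),
\]
and rewrite each term using $\alpha(\cdot)=g(X, \cdot)$. Metric compatibility gives $X\,g(X,Y)=g(\nabla_X X,Y)+g(X,\nabla_X Y)$ and $Y\,g(X,X)=2 g(X,\nabla_Y X)$, while the torsion-free property yields $[X,Y]=\nabla_X Y-\nabla_Y X$, so that $\alpha([X,Y])=g(X,\nabla_X Y)-g(X,\nabla_Y X)$. Substituting these three expressions produces
\[
d\alpha(X, Y) \;=\; g(\nabla_X X, Y) \;+\; g(X,\nabla_X Y) \;-\; 2\,g(X,\nabla_Y X) \;-\; g(X,\nabla_X Y) \;+\; g(X,\nabla_Y X).
\]

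The second step is straightforward cancellation: the two $g(X,\nabla_X Y)$ contributions drop out, leaving
\[
d\alpha(X, Y) \;=\; g(\nabla_X X, Y) \;-\; g(X, \nabla_Y X).
\]
The final step is to recognize $g(X,\nabla_Y X)=\tfrac{1}{2}Y\,g(X,X)=\tfrac{1}{2}\,d(|X|^2)(Y)$, which is again a direct consequence of metric compatibility. This gives precisely \eqref{eq: relation between one-form and metric tensor} after evaluating both sides on $Y$ and invoking tensoriality.

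There is no substantive obstacle here; the identity is a bookkeeping exercise once one commits to the Cartan formula plus $\nabla g=0$ and zero torsion. The only point worth a line of care is ensuring that the two occurrences of $g(X,\nabla_X Y)$ appear with opposite signs, which is exactly where the torsion-free condition enters through the $\alpha([X,Y])$ term; without it, the identity would pick up a torsion correction.
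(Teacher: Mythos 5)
Your proof is correct and follows essentially the same route as the paper's: the invariant Cartan formula for $d\alpha(X,Y)$, metric compatibility, and the torsion-free property of the Levi--Civita connection, differing only in the bookkeeping of whether $Y(|X|^2)$ is expanded as $2g(X,\nabla_Y X)$ immediately or kept intact until the end. No gap to report.
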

\begin{proof} Let $Y$ be an arbitrary vector field on $M$. Then 
\begin{align*}
d\alpha(X, Y) &= X(\alpha(Y)) -Y(\alpha(X)) -\alpha([X, Y]) \\
&=X(g(X, Y)) - Y(|X|^2) -g(X, [X, Y]) \\
&=g(\nabla_X X, Y) +g(X, \nabla_X Y) -Y(|X|^2) -g(X, [X, Y]) \\
&=g(\nabla_X X, Y) +g(X, \nabla_X Y -[X, Y]) -Y(|X|^2) \\ 
&=g(\nabla_X X, Y) +g(X, \nabla_Y X) -Y(|X|^2) \\ 
&=g(\nabla_X X, Y) - \frac{1}{2} d(|X|^2)(Y). 
\end{align*}
Thus, we have the equation \eqref{eq: relation between one-form and metric tensor}. 
\end{proof}

\begin{proof}[Proof of Proposition \ref{prop: geodesible condition Sullivan}]
Assume that $X$ is geodesible for some Riemannian metric, and let $\alpha := i_X g$. Then it follows from Lemma \ref{lemma: relation between one-form and metric tensor} that $i_X d \alpha =0$ since the right--hand side of the equation \eqref{eq: relation between one-form and metric tensor} vanishes. $\alpha(X) =|X|^2 =1$.  Thus there is a $1$--form $\alpha$ on $M$ such that $\alpha(X)=1$ and $i_X d\alpha=0$.

Conversely, assume that there exists a $1$--form $\alpha$ such that $\alpha(X)=1$ and $\i_X d\alpha=0$. 
Define a metric $g$ by first choosing an arbitrary metric on ${\rm{ker}}\ \alpha$, then setting $|X| \equiv 1$ 
and declaring $X$ and ${\rm{Ker}}\ \alpha$ to be orthogonal. Of course, there are many ways to define
such a metric, but the specific choice is not important for the argument. Given such
a metric $g$, it follows from Lemma \ref{lemma: relation between one-form and metric tensor} that 
$$
0=i_x d\alpha = g (\nabla_X X, \cdot) - \frac{1}{2}d(|X|^2) =g (\nabla_X X, \cdot),   
$$
which implies that $\nabla_X X$ must vanish identically, since g is nondegenerate. Hence, $X$ is geodesible  
with respect to $g$.
\end{proof}
%%%%%%%%%%%%%%%%%%%%%%%%%%%%%%%%%%%%%%%%%%%%%%%%%%%%%%%%%%%%%%%%%%%%%%%%%%%

\section{Taut foliations}
We state the condition of taut foliations:  
\begin{lemma}[\cite{Sullivan}]\label{lem: taut condition}
A foliation is taut if and only if there esists a one-form $\tau$ such that $\tau \mbox{(each foliation direction)} > 0$ and 
$d\tau \mbox{(any 2-plane tangent to foliation)} = 0$.
\end{lemma}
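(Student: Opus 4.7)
The plan is to reduce Lemma B1 to Proposition \ref{prop: geodesible condition Sullivan} (Sullivan's characterisation of geodesible vector fields) together with Lemma A1. Since the foliation is one--dimensional, I would interpret ``$2$--plane tangent to foliation'' as a $2$--plane containing the $1$--dimensional tangent line of the foliation; this is the only sensible reading in the one--dimensional case. With this convention, the condition $d\tau(\text{any }2\text{--plane tangent to foliation})=0$ is equivalent to $\iota_X d\tau=0$ for every vector field $X$ tangent to the foliation.

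For the forward implication, I would start from the definition of taut: there exists a Riemannian metric $g$ for which each leaf is a geodesic. After reparametrising to arc length one obtains a vector field $X$ tangent to the foliation satisfying $|X|_g=1$ and $\nabla_X X=0$. Setting $\tau := \iota_X g$, one has $\tau(X)=1>0$ along every foliation direction, and Lemma A1 gives
$$
\iota_X d\tau \;=\; g(\nabla_X X,\cdot)-\tfrac{1}{2}d(|X|^2)\;=\;0,
$$
so $d\tau$ vanishes on every $2$--plane containing $X$. Positivity in each foliation direction is immediate from $\tau(X)=|X|_g^2=1$ after fixing a coherent orientation of the leaves.

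For the converse, I would assume such a $\tau$ exists, pick any vector field $X$ tangent to the foliation, and rescale so that $\tau(X)=1$. The hypothesis on $d\tau$ then yields $\iota_X d\tau=0$. Proposition \ref{prop: geodesible condition Sullivan} then produces a Riemannian metric $g$ with $|X|_g=1$ and $\nabla_X X=0$; in particular the leaves of the foliation are geodesics of $g$, so the foliation is taut.

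The main obstacle will be largely notational rather than mathematical: one must clarify the convention that ``$2$--plane tangent to foliation'' means ``$2$--plane containing a foliation direction'', and record that for a $1$--dimensional foliation tautness and geodesibility coincide (since a $1$--dimensional submanifold is minimal if and only if it is a geodesic). Orientability of the foliation may require a brief comment if one wants a globally defined $X$; otherwise the argument is local and glues trivially since the condition $\iota_X d\tau=0$ is scale--invariant in $X$. Once these conventions are fixed, the proof is essentially bookkeeping: Lemma A1 drives the forward direction and Proposition \ref{prop: geodesible condition Sullivan} drives the converse, with no serious difficulty remaining.
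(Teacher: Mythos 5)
Your proposal is correct and follows essentially the same route as the paper: both identify tautness of a one--dimensional foliation with geodesibility of a unit tangent field and read the conditions on $\tau$ as $\tau(X)=1$ (after rescaling) together with $\iota_X d\tau=0$, reducing the statement to Proposition \ref{prop: geodesible condition Sullivan}. The only cosmetic difference is that you invoke Lemma \ref{lemma: relation between one-form and metric tensor} directly for the forward direction (setting $\tau=\iota_X g$), while the paper phrases the same reduction in terms of the invariant transverse codimension--one plane field $\ker\tau$.
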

\begin{proof}
According to Proposition \ref{prop: geodesible condition Sullivan}, a vector field $X$ is geodesible if and only if $X$ is 
a Reeb vector field. Thus, the corresponding flow is geodesible if and only if there exists a transverse field of codimension-one 
planes that is invariant under the flow. 
The condition stated in the Lemma B2 is a reformulation of this requirement. 
Specifically, an invariant transverse codimension one--plane field and a geodesible parametrization determine a
one--form $\tau$ satisfying $i \cdot \tau = 1$ and $(di + id)\tau = 0$ (or $i d \tau = 0$). 
Conversely, given a form as in the condition in Lemma \label{lem: taut condition} 
choose the parametrization so that $i \cdot \tau = 1$. The second condition
becomes $id\tau = 0$. Thus, $(di + id)\tau = 0$, and the kernel of $\tau$ is the desired invariant. 
This proves Lemma B2 assuming the geodesible condition. 
\end{proof}
It is worth noting that especially if $\theta$ is a nowhere-vanishing closed one-form on a manifold $N$,  
then the $\rm{Ker}\ \theta$ defines a taut foliation.
\end{document}